\PassOptionsToPackage{x11names}{xcolor}
\documentclass[11pt]{article}
\usepackage[margin=1in]{geometry}

% Useful packages
\usepackage{amsthm,amsmath,amssymb}
\usepackage{todonotes}
\usepackage{setspace}
\usepackage[T1]{fontenc}
\usepackage{hyperref}
\usepackage{cleveref}
\usepackage{geometry,textpos}
\usepackage{mathpazo}
\usepackage{euler}

\hypersetup{
    hidelinks,
    colorlinks,
    linkcolor  = [rgb]{0.0, 0.2, 0.6},   % dark blue
    citecolor  = [rgb]{0.0, 0.5, 0.5},   % dark green
    urlcolor   = [rgb]{0.5, 0.0, 0.0}    % dark red
}

\linespread{1.25}

% Bibliography
\usepackage[backend=biber, style=alphabetic,maxbibnames=99,backref=true, isbn=false]{biblatex}
\bibliography{refs}
\ExecuteBibliographyOptions{doi=false,url=false}

\newbibmacro{string+doiurl}[1]{%
  \iffieldundef{doi}{%
    \iffieldundef{url}{#1}{%
        \href{\thefield{url}}{#1}
    }
  }{
    \href{https://dx.doi.org/\thefield{doi}}{#1}
    }
}
\DeclareFieldFormat*{title}{\usebibmacro{string+doiurl}{\mkbibemph{#1}}}

% Drawing
\usepackage{tikz}
\usetikzlibrary{decorations.pathreplacing,calligraphy}
\usetikzlibrary{positioning}
\usetikzlibrary{decorations}
\usetikzlibrary{decorations.pathmorphing}
\usetikzlibrary{shapes, fit, arrows.meta}
\usetikzlibrary{calc}

% Common fields and rings

\newcommand\Z{\mathbb{Z}}

\newcommand\Q{\mathbb{Q}}

% Common environments
\newtheorem{theorem}{Theorem}[section]

\newtheorem{lemma}[theorem]{Lemma}
\newtheorem{claim}[theorem]{Claim}

\theoremstyle{definition}
\newtheorem{definition}[theorem]{Definition}
\newtheorem{remark}[theorem]{Remark}

\numberwithin{equation}{section}

% Complexity classes

% Bold letters

% Paper specific macros
\newcommand{\vc}{\mathrm{vc}}
\newcommand{\td}{\mathrm{td}}
\newcommand{\ptw}{\mathrm{ptw}}
\newcommand{\tw}{\mathrm{tw}}
\newcommand{\pw}{\mathrm{pw}}
\newcommand{\ppw}{\mathrm{ppw}}
\newcommand{\poly}{\mathrm{poly}}
\newcommand{\CI}{\mathsf{ColSub}}
\newcommand{\Hom}{\mathsf{Hom}}
\newcommand{\Sub}{\mathsf{\#Sub}}
\newcommand{\rep}{\mathrm{rep}}
\newcommand{\T}{\mathcal{T}}

\title{Monotone Bounded-Depth Complexity of\\Homomorphism Polynomials}

\author{
    {C.S. Bhargav \thanks{Indian Institute of Technology Kanpur, India. Email: \texttt{bhargav@cse.iitk.ac.in}}}\and
    {Shiteng Chen \thanks{Key Laboratory of System Software (Chinese Academy of Sciences) and State Key Laboratory of Computer Science, Institute of Software, Chinese Academy of Sciences and University of Chinese Academy of Sciences, Beijing, China. Email: \texttt{chenst@ios.ac.cn}. Supported by National Key R \& D Program of China (2023YFA1009500), NSFC 61932002 and NSFC 62272448}}\and
    {Radu Curticapean \thanks{University of Regensburg, Germany and IT University of Copenhagen, Denmark. Email: \texttt{radu.curticapean@ur.de}. Funded by the European Union (ERC, CountHom, 101077083).}}\and
    {Prateek Dwivedi\thanks{IT University of Copenhagen, Denmark. Email: \texttt{prdw@itu.dk}. Funded by the \emph{Independent Research Fund Denmark} (FLows 10.46540/3103-00116B). Also supported by Basic Algorithms Research Copenhagen (BARC), Villum Investigator Grant 54451.}}
}

\date{}

\begin{document}

\maketitle

\begin{textblock}{5}(10.3, 10.2) \includegraphics[width=60px]{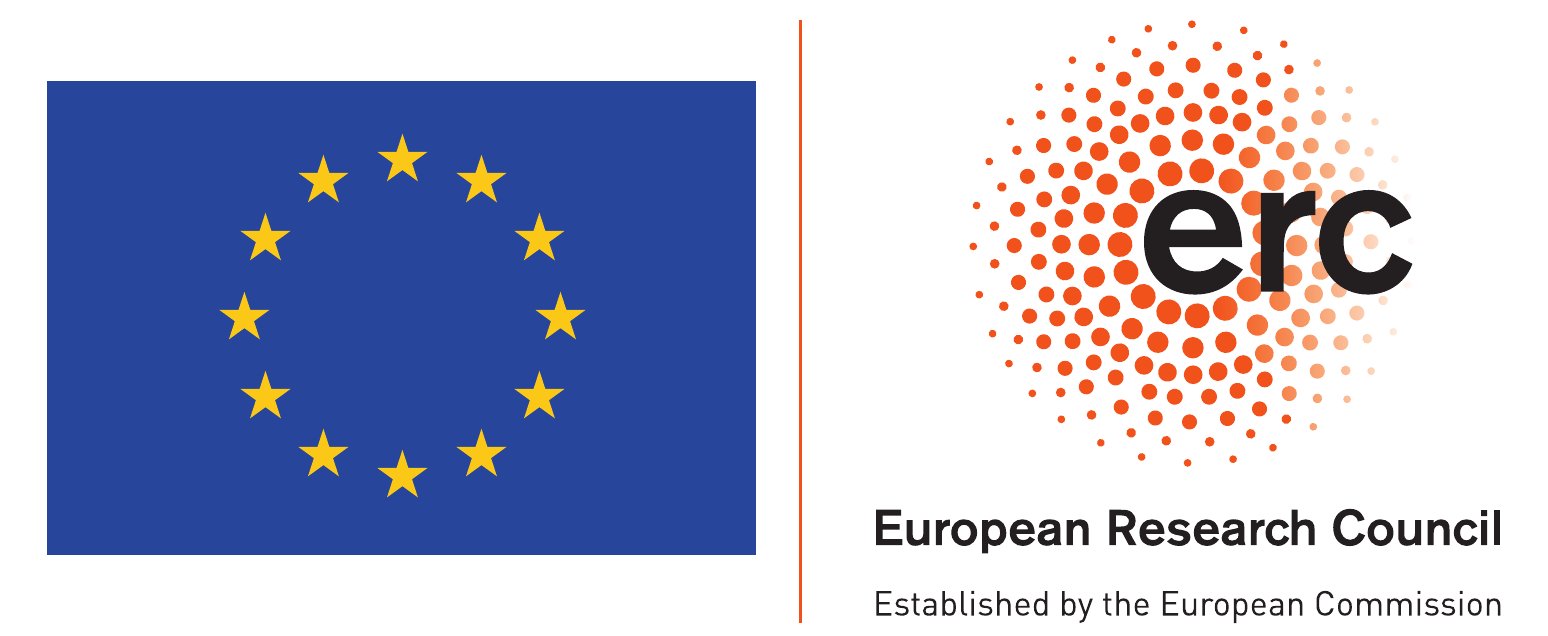} \end{textblock}

\begin{abstract}
    For every fixed graph $H$, it is known that homomorphism counts from $H$ and colorful $H$-subgraph counts can be determined in $O(n^{t+1})$ time on $n$-vertex input graphs $G$, where $t$ is the treewidth of $H$.
    On the other hand, a running time of $n^{o(t / \log t)}$ would refute the exponential-time hypothesis.
    Komarath, Pandey and Rahul (Algorithmica, 2023) studied algebraic variants of these counting problems, i.e., homomorphism and subgraph \emph{polynomials} for fixed graphs $H$.
    These polynomials are weighted sums over the objects counted above, where each object is weighted by the product of variables corresponding to edges contained in the object.
    As shown by Komarath \emph{et al.}, the \emph{monotone} circuit complexity of the homomorphism polynomial for $H$ is $\Theta(n^{\tw(H)+1})$.
    
    In this paper, we characterize the power of monotone \emph{bounded-depth} circuits for homomorphism and colorful subgraph polynomials. 
    This leads us to discover a natural hierarchy of graph parameters $\tw_\Delta(H)$, for fixed $\Delta \in \mathbb N$, which capture the width of tree-decompositions for $H$ when the underlying tree is required to have depth at most $\Delta$. We prove that monotone circuits of product-depth $\Delta$ computing the homomorphism polynomial for $H$ require size $\Theta(n^{\tw_\Delta(H^{\dagger})+1})$,
    where $H^{\dagger}$ is the graph obtained from $H$ by removing all degree-$1$ vertices. This allows us to derive an optimal depth hierarchy theorem for monotone bounded-depth circuits through graph-theoretic arguments.
    
\end{abstract}

\section{Introduction}
Counting and deciding the existence of patterns in graphs plays an important role in computer science. 
In theoretical computer science, pattern counting was among the first problems to be investigated in Valiant's seminal paper on the class $\mathsf{\#P}$ of counting problems~\cite{Val1979a}, which showed $\mathsf{\#P}$-hardness for the permanent of zero-one matrices, a problem that can equivalently be viewed as counting perfect matchings in bipartite graphs. 

\paragraph*{Counting small patterns.}

In many applications, the pattern is smaller in comparison to the target graph.
Curticapean and Marx~\cite{CM2014} modeled this setting by classifying the complexity of counting subgraphs from fixed pattern classes: Given any fixed class of graphs $\mathcal H$, they defined a problem $\Sub(\mathcal H)$ that asks, given a graph $H\in \mathcal H$ and a general graph $G$, to count the $H$-subgraphs in $G$.
The parameter is $|V(H)|$.
This problem is known to be polynomial-time solvable when the graphs in $\mathcal H$ do not contain arbitrarily large matchings---if they do contain arbitrarily large matchings, then $\Sub(\mathcal H)$ with parameter $|V(H)|$ is complete for the parameterized complexity class $\mathsf{\#W[1]}$, i.e., the analogue of $\mathsf{\#P}$ in parameterized complexity.

An analogously defined problem $\#\Hom(\mathcal H)$ of counting \emph{homomorphisms} with patterns drawn from $\mathcal H$ was also classified by Dalmau and Jonsson~\cite{DJ2004}. Here, the tractability criterion is a constant bound on the \emph{treewidth} of graphs in $\mathcal H$, a measure of the ``tree-likeness'' of $H$: The problem $\#\Hom(\mathcal H)$ is polynomial-time solvable when all graphs in $\mathcal H$ admit a constant upper bound on their treewidth, and the problem is $\mathsf{\#W[1]}$-hard otherwise with respect to the parameter $|V(H)|$. Here, a homomorphism from $H$ to $G$ is a function $h: V(H)\to V(G)$ such that $uv\in E(H)$ implies $h(u)h(v)\in E(G)$. Homomorphism counts from small patterns find direct applications in database theory, where they capture answer counts to so-called conjunctive queries~\cite{CM2016}. It was also shown that $\#\Hom(\mathcal H)$ captures the complexity of other pattern counting problems, including that of counting subgraphs~\cite{CDM2017}: In a nutshell, (i) many pattern counting problems can be expressed as unique linear combinations of homomorphism counts from graphs $H$, and (ii) in many models of computation, such linear combinations turn out to be precisely as hard as their hardest terms. This motivates understanding the complexity of these individual terms, i.e., of homomorphism counts.

\paragraph{Lower bounds under ETH.}
Following the classification of $\Sub(\mathcal H)$ and $\#\Hom(\mathcal H)$ under parameterized complexity assumptions, almost-tight quantitative bounds were obtained under the exponential-time hypothesis~\cite{IP2001, LMS2011}:
For any graph $H$, there is an $O(n^{\tw(H)+1})$ time algorithm for counting homomorphisms from $H$ into $n$-vertex target graphs, and assuming the exponential-time hypothesis, Marx ruled out $n^{o(\tw(H) / \log \tw(H))}$ time algorithms~\cite{Mar2010}, recently revisited in~\cite{KMPS2024,CDNW2025}.
Through connections between homomorphism counts and other pattern counts~\cite{CDM2017}, these bounds translate directly to other counting problems.
For example, there is an $O(n^{\vc(H)})$ time algorithm for counting $H$-subgraphs in an $n$-vertex graph $G$, where $\vc(H)$ is the vertex-cover number of $H$.
As a consequence of the lower bound on counting homomorphisms, the exponential-time hypothesis rules out $n^{o(\vc(H) / \log \vc(H))}$ time algorithms~\cite{CDM2017}.

Thus, some slack remains between the known upper and conditional lower bounds on the exponents of pattern counting problems, even asymptotically:  
It would be desirable to settle the $\log$-factor in the exponent of the running time. Moreover, one might also dare to ask for the precise exponent for concrete finite graphs $H$, such as $K_3$ (which amounts to triangle counting) or $K_4$ or $C_6$. Finally, let us stress that the lower bounds on the exponent are conditioned on the exponential-time hypothesis, an assumption that is \emph{a priori} stronger than $\mathsf P \neq \mathsf{NP}$.

\paragraph*{From counting problems to polynomials.}

Valiant's seminal papers~\cite{Val1979, Val1980} studied the problem of counting perfect matchings from the perspective of both counting and algebraic complexity. In our paper, following the work of Komarath, Pandey, and Rahul~\cite{KPR2023}, we consider an algebraic version of pattern counting problems.

For undirected graphs $H$ and $n\in\mathbb{N}$, we consider
the homomorphism polynomial $\Hom_{H,n}$ on variables $x_{i,j}$ for $i,j\in[n]$
and its set-multilinear version, the colorful subgraph polynomial $\CI_{H,n}$ on variables
$x_{i,j}^{(e)}$ for $i,j\in[n]$ and $e\in E(H)$. The latter can often be handled more easily in proofs, while complexity results can be transferred between these two polynomials.

\begin{remark}
We deviate slightly from the notation used by Komarath, Pandey, and Rahul~\cite{KPR2023}, who defined a polynomial $\mathsf{ColIso}_{\mathcal{H},n}$ with variable indices that differ from ours. Our polynomial $\CI_{H,n}$ and their polynomial $\mathsf{ColIso}_{\mathcal{H},n}$ can be obtained from each other by renaming variables. We consider our notation more intuitive, as it can be obtained from the homomorphism polynomial more directly and also highlights the set-multilinearity of the polynomial.    
\end{remark}

The polynomial $\Hom_{H,n}$ can be viewed as the weighted homomorphism count from $H$ into a complete $n$-vertex graph with generic indeterminates as edge-weights. 
Similarly, $\CI_{H,n}$ can be viewed as counting the color-preserving homomorphism count from a colorful graph $H$ into a complete graph with indeterminate edge-weights and $n$ vertices per color class.
Formally, these multivariate polynomials are defined as
\begin{align*}
\Hom_{H,n} & =\sum_{f:V(H)\to[n]}\prod_{uv\in E(H)}x_{f(u),f(v)},\\
\CI_{H,n} & =\sum_{f:V(H)\to[n]}\prod_{uv\in E(H)}x_{f(u),f(v)}^{(uv)}.
\end{align*}

The complexity of multivariate polynomials is commonly studied via algebraic circuits, first formalized by Valiant \cite{Val1979}. The efficiency of a circuit is usually quantified by its \emph{size} (number of edges and gates) and its \emph{depth} (number of layers). The size captures the total number of operations the circuit performs, and the \emph{depth} roughly corresponds to \emph{parallelism}. For convenience, we will instead consider the \emph{product-depth}, which is the number of `multiplication layers' in the circuit. Refer to \Cref{sec:prelims} for formal definitions. These efficiency measures define natural algebraic circuit complexity classes, for instance, $\mathsf{VP}$ --- polynomials of polynomially bounded degree computable by circuits of size $\poly(n)$, and $\mathsf{VNP}$ --- polynomials which can be expressed as a hypercube sum of $\mathsf{VP}$ circuits.

Valiant established that the permanent polynomial is complete for the class $\mathsf{VNP}$, while its sibling, the determinant polynomial, is complete for a seemingly smaller class, $\mathsf{VBP}$—the class of polynomially bounded algebraic branching programs (see \Cref{def:abp}). In a recent line of work \cite{CLV2021, DMM+2016, MS2018}, homomorphism polynomials have been used to obtain natural polynomials which are complete for several well-studied algebraic circuit classes.

\paragraph{Monotone circuits.}
\emph{Monotone} circuits over a field like $\Z$ are circuits that do not use negative constants, and hence computations performed by them cannot feature cancellations (\Cref{def:monotone-ckt}). Several important techniques for proving upper bounds on the complexity of polynomials (e.g., dynamic programming) directly yield monotone circuits. Compared to general computational models, lower bounds for monotone computation are much better understood, and many exponential lower bounds~\cite{Sch1976, Val1980, JS1982, RY2011, GS2012, CKR2022} and strong algebraic complexity class separations~\cite{Sni1980, HY2016, Yeh2019, Sri2020} are known. As a striking example, monotone variants of the algebraic complexity classes $\mathsf{VP}$ and $\mathsf{VNP}$ are proven to be different \cite{Sri2020, Yeh2019}. In contrast, Hrubes~\cite{Hru2020} showed that strong-enough monotone lower bounds of a special kind (called $\epsilon$-\emph{sensitive}) imply unconditional lower bounds for general arithmetic circuits! 

In a fascinating recent work, Komarath, Pandey and Rahul~\cite{KPR2023} studied the monotone arithmetic circuit complexity of the polynomials $\Hom_{H,n}$ and discovered that this complexity is completely determined by the treewidth of the pattern graph $H$. More precisely, they show that the smallest monotone circuit computing $\Hom_{H,n}$ is of size $\Theta(n^{\tw(H)+1})$. Similarly, they show that algebraic branching programs for $\Hom_{H,n}$ are of size $\Theta(n^{\pw(H)+1})$, where $\pw(H)$ is the \emph{pathwidth} of $H$, a linear version of treewidth. Moreover, they also consider the monotone formula complexity of $\Hom_{H,n}$ and show that it is  $\Theta(n^{\td(H)+1})$, where $\td(H)$ is the \emph{treedepth} of $H$, the minimum height of a tree on vertex set $V(H)$ that contains all edges of $H$ in its tree-order.
These results together show that, when considering homomorphism polynomials for fixed patterns $H$, the power of monotone computation is precisely characterized by graph-theoretic quantities of $H$: For natural and well-studied monotone computational models, the precise exponent $c_H$ in the complexity $\Theta(n^{c_H})$ is the value of a natural and well-studied graph parameter of $H$. 

\paragraph*{Our results: Monotone bounded-depth models.}
In this paper, we investigate whether the correspondence between monotone circuit complexity and graph parameters can also be established for another restriction of monotone circuits, namely \emph{bounded-depth} monotone circuits: Are there natural graph parameters that dictate the bounded-depth monotone complexity of $\Hom_{H,n}$?

Bounded-depth circuits are of central importance in algebraic complexity due to the phenomenon of \emph{depth reduction}. In a sequence of works~\cite{VSBR1983,AV2008,Koi2012,Tav2015}, it was shown that any algebraic circuit of size $s$ computing a polynomial of degree $d$ can also be simulated by a product-depth $\Delta$ circuit of size $s^{O(d^{1/\Delta})}$. If the circuit was monotone to begin with, the resulting bounded-depth circuit is also monotone. 
Depth reduction also implies that strong enough lower bounds for bounded-depth circuits will lead to general circuit lower bounds -- an exceptionally hard open question. A lot of work has gone into proving strong bounded-depth circuit lower bounds (see~\cite{Sap2015} for a survey). Recently, following the breakthrough result of Limaye, Srinivasan and Tavenas~\cite{LST2021} superpolynomial lower bounds have been shown for bounded-depth circuits (also see~\cite{BDS2024,AGK+2023}).

Studying the monotone bounded-depth complexity of $\Hom_{H,n}$ naturally leads us to define bounded-depth versions of treewidth, the \emph{$\Delta$-treewidth} $\tw_\Delta(H)$ for any fixed $\Delta \in \mathbb N$. These graph parameters ask to minimize the maximum bag size over all tree-decompositions of $H$, however with the twist that only tree-decompositions with an underlying tree of height at most $\Delta$ are admissible. Their values interpolate between $|V(H)|-1$ (when only height $1$ is allowed) and $\tw(H)$ (when no height restrictions are imposed), and they are connected to the vertex-cover number in the special case $\Delta = 2$ (see \Cref{sec:bound-dp-tw}). Bounded-depth variants of treewidth implicitly appear in balancing techniques for tree-decompositions~\cite{CIP2016,BH1998}, and the $\Delta$-treewidth of paths also appears implicitly in divide-and-conquer schemes for iterated matrix multiplication in a bounded-depth setting~\cite{LST2021}. A recent work of Adler and Fluck~\cite{AF2024} studied a notion that bounds the width and depth simultaneously, which they call bounded depth treewidth. Our notion of $\tw_\Delta(H)$ only bounds the height of the tree decomposition to $\Delta$. In particular, for a fixed $\Delta$, there is always a tree decomposition of height $\Delta$ for any graph $H$, but with a possibly large treewidth.

We show that the $\Delta$-treewidth of graphs completely characterizes the complexity of $\Hom_{H,n}$ for monotone circuits of product-depth at most $\Delta$. For technical reasons described later, it is however not the $\Delta$-treewidth of $H$ itself that governs the complexity, but rather the $\Delta$-treewidth of the graph $H^{\dagger}$ obtained by removing all vertices of degree at most $1$. We call this the \emph{pruned} $\Delta$-treewidth $\ptw_{\Delta}$ of a graph $H$. 
Our main result is as follows:

\begin{theorem}
\label{thm: main}
    Let $H$ be a fixed graph and let $\Delta$ and $n$ be natural numbers. 
    Then the polynomials $\Hom_{H,n}$ and $\CI_{H,n}$ have monotone circuits of size $O(n^{\ptw_\Delta(H)+1})$ and product-depth $\Delta$.
    Moreover, any monotone circuit of product-depth $\Delta$ has size $\Omega(n^{\ptw_\Delta(H)+1})$.
\end{theorem}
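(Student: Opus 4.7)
The plan is to prove both directions by connecting monotone product-depth-$\Delta$ circuits for $\CI_{H,n}$ to rooted tree decompositions of $H^\dagger$ of height at most $\Delta$: the upper bound builds a circuit from such a decomposition, and the lower bound extracts such a decomposition from an arbitrary circuit. The complexity of $\Hom_{H,n}$ is obtained in parallel via variable identifications/liftings between the $\Hom$-variables $x_{i,j}$ and the $\CI$-variables $x_{i,j}^{(uv)}$; these translations change size and product-depth only by $O_H(1)$ factors, so for the proposal I focus on $\CI_{H,n}$.

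\paragraph*{Upper bound.}
Fix a rooted tree decomposition $(T,\{B_t\})$ of $H^\dagger$ of height $\leq \Delta$ and width $\leq w$, where $w:=\ptw_\Delta(H)$. I would build the circuit by bag-by-bag dynamic programming. At each node $t$, the circuit maintains a table $\{f_t[\sigma] : \sigma \in [n]^{B_t}\}$, where $f_t[\sigma]$ collects all colored sub-homomorphisms into the subgraph of $H^\dagger$ induced by the subtree below $t$ that restrict to $\sigma$ on $B_t$. For each child $c$ of $t$, the corresponding factor in $f_t[\sigma]$ is $\sum_{\tau} f_c[\tau]$, with $\tau$ ranging over assignments consistent with $\sigma$ on $B_t \cap B_c$. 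Pendant vertices $v \in V(H) \setminus V(H^\dagger)$ attached to some $u \in V(H^\dagger)$ are absorbed by precomputing the linear forms $L^{(uv)}_{i} := \sum_{j\in[n]} x^{(uv)}_{i,j}$ once at the input layer and multiplying $L^{(uv)}_{\sigma(u)}$ into whichever bag first contains $u$. Each tree level contributes exactly one product layer, so the product-depth equals the tree height ($\leq \Delta$); the size is $O(n^{w+1})$ because there are $O_H(1)$ bags and each stores $n^{w+1}$ entries.

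\paragraph*{Lower bound.}
Let $C$ be any monotone product-depth-$\Delta$ circuit computing $\CI_{H,n}$, of size $s$. Since $\CI_{H,n}$ is set-multilinear with color classes indexed by $E(H)$, a monotone set-multilinear homogenization (at an $O_H(1)$ cost) lets me assume that every gate $g$ of $C$ computes a polynomial whose monomials are supported on a single fixed edge-subset $F_g \subseteq E(H)$, with one variable per edge. At each product gate, the edge-subsets of the children are pairwise disjoint and their union equals $F_g$; at each sum gate they coincide. This turns $C$ into a hierarchical decomposition tree $\T$ of $E(H)$ of height at most $\Delta$. I would attach to each node $g$ a bag $B_g := \bigl(V(F_g) \cap V(E(H)\setminus F_g)\bigr) \cap V(H^\dagger)$ of ``boundary'' non-pendant vertices shared between $F_g$ and its complement. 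A routine verification shows that $(\T,\{B_g\})$ yields a valid tree decomposition of $H^\dagger$ of height $\leq \Delta$; consequently $\max_g |B_g| \geq \ptw_\Delta(H)+1$. Finally, an evaluation-dimension/projection argument at the layer achieving the maximum bag shows that $C$ must realize at least $n^{\ptw_\Delta(H)+1}$ distinct partial polynomials, yielding $s = \Omega(n^{\ptw_\Delta(H)+1})$.

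\paragraph*{Main obstacle.}
The principal difficulty is the lower bound: formalizing the extraction of a height-$\Delta$ tree decomposition of $H^\dagger$ from an arbitrary monotone circuit, and then converting bag size into a size lower bound. Two subtleties require care. First, the circuit must simultaneously produce all $n^{|V(H)|}$ homomorphism-monomials, so the decomposition must be extracted globally rather than per parse tree; the right object is a canonical hierarchical decomposition of $E(H)$ that $C$ commits to after set-multilinear homogenization, together with a counting argument (e.g., via evaluation dimension of a suitably chosen restriction) that lower-bounds the number of distinct gate values at the critical layer. Second, pendant vertices must be excised so that they do not inflate bag sizes---this is exactly why the decomposition is extracted on $H^\dagger$ rather than $H$, and it uses that a pendant's contribution is always a linear form computable at the input layer without consuming any of the $\Delta$ product layers.
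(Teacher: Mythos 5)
Your upper bound is essentially correct and closely parallels the paper's dynamic-programming construction (Lemma~\ref{Lem:UB}): a bottom-up table indexed by assignments to bags, with pendant vertices of $H$ handled in a way that does not consume a product layer. Your mechanism for pendants (precomputing linear forms $L^{(uv)}_i=\sum_j x^{(uv)}_{i,j}$ at the input layer) is a valid alternative to the paper's device of attaching fresh leaf bags and noting they are inactive in the rep-height accounting.

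The lower bound, however, has a genuine gap. You propose to extract a \emph{single} ``canonical hierarchical decomposition tree of $E(H)$'' from the circuit after set-multilinear homogenization, and you explicitly state that the decomposition ``must be extracted globally rather than per parse tree.'' This is where the argument breaks: the circuit is a DAG, not a tree. After set-multilinear homogenization each gate $g$ does have a well-defined type $F_g\subseteq E(H)$, but a sum gate of type $F$ may have one child that is a product $[e_1]\cdot[e_2,e_3]$ and another that is a product $[e_1,e_2]\cdot[e_3]$; different reduced parse trees through the circuit generally induce \emph{different} hierarchical decompositions of $E(H)$. There is no single decomposition the circuit ``commits to,'' so the tree $\T$ to which you attach boundary bags is not well-defined. (There is also an off-by-one: even if it were a tree, including the input gates gives height $\Delta+1$, not $\Delta$; the paper handles this by discarding leaf bags after extraction.) Finally, your ``evaluation-dimension/projection argument at the layer achieving the maximum bag'' is only a gesture: in a DAG it is unclear which layer achieves the maximum bag, precisely because different parse trees attain their critical bags at different gates.

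The paper's route is the opposite of your stated intuition and is what makes the argument go through: it extracts a tree decomposition of $H^{\dagger}$ from \emph{each individual} reduced parse tree $\T$ of a monomial $m$ (Claim~\ref{clm:td-extract}), via a marking procedure whose bags are in spirit your boundary sets $V(F_g)\cap V(E(H)\setminus F_g)\cap V(H^\dagger)$. It then identifies, for that $m$, a gate $g$ with $|X_g|\geq \ptw_\Delta(H)+1$, and shows that any other monomial $m'$ whose parse tree also contains $g$ must agree with $m$ on all of $X_g$: otherwise one can splice $\T_g$ and $\T'_g$ (or sibling subtrees at $g$) to build a hybrid parse tree producing a spurious monomial supported on both $(u,f(u))$ and $(u,f'(u))$, which monotonicity forbids. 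Counting then gives at most $n^{k-t-1}$ monomials per gate and hence $\Omega(n^{t+1})$ gates. To repair your proposal, replace the global-decomposition step with this per-parse-tree extraction and replace the evaluation-dimension sketch with the explicit spurious-monomial swapping argument.
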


Note that any fixed pattern graph $H$ on $k$ vertices gives a homomorphism polynomial on $n^{k}$ monomials, which has a trivial $\poly(n)$ sized monotone circuit of depth two. We stress that we wish to determine the precise \emph{exponent} in this polynomial size: In general, $k$ could be much larger than the pruned $\Delta$-treewidth of $H$.

We also study the related computational model of algebraic branching programs (ABPs). An ABP is a directed acyclic graph with a source vertex $s$ and a sink $t$, with edges between vertices labeled by variables or constants. The \emph{size} of the ABP is the total number of vertices in the graph, and the length of the longest path from $s$ to $t$ is its length. We refer the reader to \Cref{sec:prelims} for a formal definition of an ABP and its computation.

For a graph $H$, we define a new graph parameter called $\Delta$-\emph{pathwidth} $\pw_{\Delta}(H)$, which asks to minimize the bag size over all \emph{path decompositions} of $H$ where the underlying path is of length $\Delta$. We defer the formal definition these graph parameters to \Cref{sec:bound-dp-tw}. Similar to the case of circuits, we show that the $\Delta$-pathwidth of the pruned graph $H^{\dagger}$ (obtained by removing degree $1$ vertices from $H$), which we call the \emph{pruned} $\Delta$-pathwidth $\ppw_{\Delta}$ of $H$ characterizes the monotone ABP of length $\Delta$ computing $\Hom_{H,n}$.

\begin{theorem}
\label{thm: abp}
    Let $H$ be a fixed graph and let $\Delta$ and $n$ be natural numbers such that $\Delta \geq |E(H)|$. Then the polynomials $\Hom_{H,n}$ and $\CI_{H,n}$ can be computed by monotone algebraic branching programs of size $O(n^{\ppw_\Delta(H)+1})$ and length $\Delta$. Moreover, any monotone algebraic branching program of length $\Delta$ has size $\Omega(n^{\ppw_\Delta(H)+1})$.
\end{theorem}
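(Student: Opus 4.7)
The plan is to mirror the circuit-versus-treewidth argument of \Cref{thm: main} with paths in place of trees, establishing a tight correspondence between length-$\Delta$ monotone ABPs for $\CI_{H,n}$ and path decompositions of $H^{\dagger}$ of length $\Delta$. As in \Cref{thm: main}, passing from $H$ to $H^{\dagger}$ is justified by the following observation: for a degree-$1$ vertex $v$ with unique neighbor $u$, the sum $L_u := \sum_{j \in [n]} x^{(uv)}_{f(u),j}$ depends only on $f(u)$ and so is a linear polynomial that can be precomputed for each of the $n$ possible values of $f(u)$ and multiplied in at a single ABP edge without affecting the state of any layer. This lets our ABP focus on the non-leaf vertices of $H$.

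For the upper bound, I would fix a path decomposition $(B_1, \dots, B_\Delta)$ of $H^{\dagger}$ of width $w := \ppw_\Delta(H)$. The ABP has one layer of nodes per bag; layer $i$ contains a node for every assignment $\varphi_i : B_i \to [n]$, contributing at most $n^{w+1}$ nodes per layer and $O(n^{w+1})$ in total. An edge from $\varphi_i$ to $\varphi_{i+1}$ is present iff the two assignments agree on $B_i \cap B_{i+1}$, and is labeled by the product of $x^{(uv)}_{\varphi(u),\varphi(v)}$ over edges $uv \in E(H^{\dagger})$ introduced at this transition, multiplied by a precomputed linear factor $L_u$ for each leaf of $H$ whose neighbor $u$ is forgotten at this step. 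Standard properties of path decompositions ensure each edge of $H$ is accounted for exactly once, and summing over $s$-$t$ paths reproduces $\CI_{H,n}$. The corresponding bound for $\Hom_{H,n}$ follows by the standard substitution relating the two polynomials.

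For the lower bound, let $\mathcal A$ be a monotone ABP of length $\Delta$ computing $\CI_{H,n}$. Since $\CI_{H,n}$ is set-multilinear in the color classes indexed by $E(H)$, a monotone set-multilinearization preserves both length and monotonicity, so we may assume each source-sink path of $\mathcal A$ uses each color class exactly once. This partitions $E(H)$ according to the layer at which each color class is consumed; define the frontier $F_i \subseteq V(H^{\dagger})$ as the set of non-leaf vertices of $H$ incident to edges consumed both at some layer $\leq i$ and some layer $> i$. The sequence $(F_1, \dots, F_\Delta)$ is a path decomposition of $H^{\dagger}$ of length $\Delta$, and thus has some bag of size at least $\ppw_\Delta(H)$. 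Monotonicity together with set-multilinearity then forces distinct partial assignments $\varphi : F_i \to [n]$ to be routed through distinct ABP nodes at layer $i$: identifying two such nodes would create spurious cross-term monomials that cannot cancel in a monotone computation, yielding the claimed $\Omega(n^{\ppw_\Delta(H)+1})$ bound.

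The main obstacle will be formalizing this frontier-to-nodes correspondence cleanly — one needs to argue that the ABP nodes at layer $i$ are in surjection with the distinct completions of the frontier assignment $F_i \to [n]$, so that each layer has at least $n^{|F_i|}$ nodes. This is the ABP analogue of the corresponding lemma used for monotone circuits in \Cref{thm: main}, but is simpler here because there is no branching structure: the argument reduces to a careful pigeonhole combined with the fact that a degree-$1$ vertex of $H$ never belongs to any frontier, since its unique incident edge is consumed within a single layer. The hypothesis $\Delta \geq |E(H)|$ ensures that the partition of $E(H)$ across layers is feasible for both constructions, and is tight in the sense that shorter ABPs simply cannot accommodate a degree-$|E(H)|$ set-multilinear polynomial.
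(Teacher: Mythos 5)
Your upper bound matches the paper's approach: dynamic programming on a path decomposition of $H^\dagger$, with one layer of ABP nodes per bag, and degree-$1$ vertices of $H$ absorbed into edge labels; the paper presents exactly this as a corollary of \Cref{Lem:UB} via \Cref{rem:ub-abp}, observing that the recursion becomes a skew multiplication when the decomposition is a path. That part is fine.

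The lower bound, however, has a real gap. You assert that set-multilinearization ``partitions $E(H)$ according to the layer at which each color class is consumed,'' and then build a single global path decomposition $(F_1,\dots,F_\Delta)$. But a monotone set-multilinear ABP need not be \emph{ordered}: two different $s$-$t$ paths can consume the color classes in different orders, so there is no single layer at which a given color class is consumed, and hence no well-defined global partition of $E(H)$. Separately, even granting a fixed ordering, the frontier bags $F_i$ as you define them do not cover all edges of $H^\dagger$: if a vertex $u$ has all of its incident edges consumed within one layer $i$, then $u$ lies in no $F_j$, so the coverage condition of a path decomposition fails for the edges at $u$. The paper sidesteps both problems by arguing \emph{per monomial}: for each monomial of $\CI_{H,n}$ it takes the corresponding parse tree of the skew circuit, extracts a monomial-specific path decomposition of $H^\dagger$ via the marking procedure of \Cref{clm:td-extract} (bags are unions of still-unmarked vertices, which guarantees every edge is covered), locates a bag of size at least $\ppw_\Delta(H)+1$, and charges the gate associated with that bag; the spurious-monomial argument then shows that at most $n^{k-\ppw_\Delta(H)-1}$ monomials can charge the same gate, and double counting over the $n^k$ monomials gives $\Omega(n^{\ppw_\Delta(H)+1})$. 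No global ordering or global decomposition is needed. To repair your version you would either have to establish an ordered normal form for monotone set-multilinear ABPs (nontrivial and unnecessary) or retreat to the per-monomial charging scheme.
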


For a length-$\Delta$ ABP to compute the polynomial $\Hom_{H,n}$ (or $\CI_{H,n}$), its length needs to be at least the degree of the polynomial (which is $|E(H)|$). Otherwise, we cannot even compute a single monomial. We note that the above theorem also implies a bound on the ABP \emph{width}.

For a fixed pattern graph, it was shown in \cite{KPR2023} that $\Hom_{H,n}$ and $\CI_{H,n}$ have the same ``monotone complexity''. We observe that the reduction holds even in the bounded-depth (bounded-length) case (\Cref{lem: col-hom-reduction}), so we prove our results only for $\CI_{H,n}$. The upper bounds of \Cref{thm: main} and \Cref{thm: abp} are shown in \Cref{sec:upper-bound} and \Cref{sec:lower-bound}, respectively.

\paragraph*{Our results: Monotone depth hierarchy.}

Finally, by turning our attention to pattern graphs of non-constant size, we can prove a depth hierarchy theorem for monotone circuits: Using the tight characterization from the above theorem and the properties of pruned $\Delta$-treewidth, we are able to obtain the following.

\begin{theorem}
\label{thm:dh}
    For any natural numbers $n$ and $\Delta$, there exists a pattern graph $H_{\Delta}$ of size $\Theta(n)$ such that $\CI_{H_{\Delta},n}$ can be computed by a monotone circuit of size $\poly(n)$ and product-depth $\Delta+1$, but every monotone circuit of product-depth $\Delta$ computing the polynomial needs size $n^{\Omega(n^{1/\Delta})}$. 
\end{theorem}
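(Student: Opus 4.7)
The plan is to construct $H_\Delta$ explicitly and invoke \Cref{thm: main} for both the upper and the lower circuit-size bound. Let $T = T_{\Delta+1}^{b}$ denote the complete rooted $b$-ary tree of depth $\Delta+1$ (so its leaves sit at the $(\Delta+1)$-st level), and choose $b = \Theta(n^{1/\Delta})$ so that $|V(T)| = \Theta(b^{\Delta}) = \Theta(n)$. Define $H_\Delta$ by attaching one pendant vertex to each leaf of $T$. Then the only degree-$1$ vertices of $H_\Delta$ are the newly added pendants, so a single pass of the pruning operator yields $H_\Delta^{\dagger} = T$, while $|V(H_\Delta)| = \Theta(n)$, as required.

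For the polynomial upper bound I would use $T$ itself as a tree decomposition of $H_\Delta^{\dagger}$: assign to every non-root node $v$ the bag $\{v, \mathrm{parent}(v)\}$, and use $\{r\}$ at the root where $r = \mathrm{root}(T)$. This decomposition has depth $\Delta+1$ and every bag has size at most $2$, so $\ptw_{\Delta+1}(H_\Delta) \leq 1$; plugging this into \Cref{thm: main} yields a monotone circuit of product-depth $\Delta+1$ and size $O(n^{2})$.

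For the lower bound the heart of the matter is the claim $\tw_\Delta(T) \geq b$, which I would establish by induction on $\Delta$. The base case $\Delta=1$ is immediate: $T_{2}^{b}$ is a star on $b+1$ non-isolated vertices and its unique depth-$1$ decomposition consists of a single bag that must contain all of them. For the inductive step, consider any depth-$\Delta$ decomposition $\sigma$ of $T_{\Delta+1}^{b}$ and let $B^{\star}$ be the topmost bag of $\sigma$ containing the tree root $r$ (this bag is unique because $r$'s bags form a connected subtree). If $B^{\star}$ already contains all $b$ children of $r$, then $|B^{\star}| \geq b+1$ and we are done. Otherwise there is a child $c_i$ whose bags live strictly below $B^{\star}$; the entire subtree $T_{c_i} \cong T_{\Delta}^{b}$ must then be covered by the subtree of $\sigma$ rooted below $B^{\star}$, which has depth at most $\Delta-1$. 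Restricting $\sigma$ to the vertices of $T_{c_i}$ gives a depth-$(\Delta-1)$ tree decomposition of $T_{\Delta}^{b}$, and the inductive hypothesis forces some bag of size $\geq b+1$. A matching upper bound $\tw_\Delta(T) \leq b$ is obtained by merging the top two levels of $T$ into the root bag of a new decomposition and recursing on each child subtree. Hence $\ptw_\Delta(H_\Delta) = \Theta(b) = \Theta(n^{1/\Delta})$, and \Cref{thm: main} forces any monotone product-depth-$\Delta$ circuit to have size $n^{\Omega(n^{1/\Delta})}$.

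The main obstacle is the lower bound on $\tw_\Delta(T)$. A clean argument has to handle every placement of $r$ inside $\sigma$, tracking the fact that bags below $B^{\star}$ may contain extra copies of $r$ that one must not double-count, and verifying that the induced decomposition of $T_{c_i}$ really does fit in depth at most $\Delta-1$ even when $B^{\star}$ is not the decomposition root of $\sigma$. Getting the bag-size accounting tight enough to absorb this $r$-overhead without weakening the $b$ lower bound is where the proof requires the most care.
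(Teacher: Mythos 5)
Your construction is essentially the paper's: a full $b$-ary tree with the bottom level acting as prunable degree-$1$ vertices, so that $H_\Delta^\dagger$ is the full $b$-ary tree of height $\Delta+1$; attaching exactly one pendant per leaf (rather than $b$ of them) is a sensible variant that keeps $|H_\Delta| = \Theta(b^\Delta)$ and hence $b = \Theta(n^{1/\Delta})$. The upper bound via \Cref{thm: main} and the decomposition of width $1$ is fine. The problem is the inductive lower bound on $\tw_\Delta$ of the full $b$-ary tree, and you yourself flag it: anchoring the induction at $B^\star$, the topmost bag of $\sigma$ containing $r$, does not cleanly isolate the subtree $T_{c_i}$. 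Even when $c_i \notin B^\star$, other vertices of $T_{c_i}$ may sit in $B^\star$ or in bags above it; an edge of $T_{c_i}$ both of whose endpoints lie in $B^\star$ might be covered only at or above $B^\star$, in which case intersecting the bags of the subtree of $\sigma$ hanging below $B^\star$ with $V(T_{c_i})$ need not yield a valid tree decomposition of $T_{c_i}$. So the step ``restricting $\sigma$ \dots gives a depth-$(\Delta-1)$ tree decomposition of $T_\Delta^b$'' is not justified as written.

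The paper avoids this by arguing at the root bag $R$ of the decomposition tree $\sigma$ itself, not at the topmost bag containing $r$ (see \Cref{subgraph treewidth lemma}). Since removing $r$ from the pattern leaves $b$ vertex-disjoint connected subtrees, either $R$ meets all of them (so $|R|\geq b$), or $R$ misses some subtree $G_1$ \emph{entirely}. In the latter case, connectivity of $G_1$ together with the connectivity of each vertex's bag-set forces all of $G_1$, and every bag meeting $G_1$, into a single child subtree of $R$; restricting that subtree to $V(G_1)$ then really is a tree decomposition, of depth at most $\Delta-1$, and one recurses. This is the missing ingredient. Note also that this route gives $\tw_\Delta \geq b-1$ rather than your claimed $b$, which is all you need asymptotically, so you should weaken the inductive invariant accordingly. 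With that lemma substituted for your $B^\star$ argument, the rest of your plan goes through.
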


    By general depth-reduction results, any monotone circuit of size $\poly(n)$ computing a polynomial of degree $n$ can be flattened to a monotone circuit of product-depth $\Delta$ and size $n^{O(n^{1/\Delta})}$, showing that the above theorem is optimal.
    
    We also note that a similar \emph{near-optimal} statement with a lower bound of $\exp(n^{\Omega(1/\Delta)})$ can be obtained from earlier results provided the product-depth $\Delta = o(\log n/\log \log n)$ is small (see~\cite{CELS2018}). Our results improve upon this in two ways: Firstly, our $\Omega(\cdot )$ appears in the first rather than second exponent, thus yielding a stronger and optimal lower bound. Secondly, our results hold for any product-depth $\Delta$.

\section{Preliminaries} \label{sec:prelims}

For a natural number $n \in \mathbb{N}$, we will use $[n]$ to refer to the set $\{1,\ldots,n\}$. We begin with some algebraic complexity and graph-theoretic preliminaries. Readers comfortable with these notions can safely skip this section.

\subsection{Algebraic complexity theory}

Algebraic circuits are analogous to Boolean circuits, where logical operators are replaced with underlying field operations. In this paper, we fix our field to be rationals $\Q$.
\footnote{Our results hold for any field by making appropriate changes to the definition of monotone computation so that cancellations are avoided.} To study the complexity of polynomials, Valiant formulated the algebraic complexity theory \cite{Val1979}. Here, we will only give relevant definitions, but we encourage interested readers to refer to surveys for a more comprehensive overview of the area \cite{Mah2014,Sap2015,SY2009}. 

\begin{definition}[Algebraic Circuits] \label{def:circuits}    
    An \emph{algebraic circuit} $C$ is a \emph{layered} directed acyclic graph with a unique root node, called the output gate. Each leaf node of $C$ is called an input gate and is labeled by one of the variables $x_1, \dots, x_n$ or a field constant.
    Gates are either labeled $+$ or $\times$, and on every path from the output gate to some input gate, these gate types alternate. The \emph{(product) depth} of $C$ is the number of (multiplication) layers in the circuit, while its \emph{size} is the number of vertices of the underlying graph. The circuit naturally computes a polynomial: a $+ (\times)$ gate computes the sum (product) of the polynomials computed by its children.
\end{definition}

\begin{definition}[Skew Circuits]
\label{def:skew-ckt}
    An algebraic circuit is called \emph{skew} if, for every multiplication gate, \emph{at most one} of its children is an internal (non-input) gate.
\end{definition}

\begin{definition}[Algebraic Branching Programs] \label{def:abp}
    An \emph{Algebraic Branching Program} (ABP) is a directed acyclic graph with edges labeled by a variable or a field constant. It has a designated \emph{source} node $s$ (of in-degree $0$) and a \emph{sink} node $t$ (of out-degree $0$). A path from $s$ to $t$ computes the product of all edge labels along the path. The polynomial computed by the ABP is the sum of the terms computed along all the paths from $s$ to $t$. If all the constants in the ABP are non-negative, the ABP is \emph{monotone}. The \emph{size} of an ABP is the total number of vertices in the graph, and the length of the longest path from $s$ to $t$ is the \emph{length} of the ABP. 
\end{definition}

Monotone computation, which is free of cancellations, can be simulated by algebraic circuits (branching programs) by restricting the choice of field constants. 

\begin{definition}[Monotone Circuits and ABPs] \label{def:monotone-ckt}
    A \emph{monotone} circuit (ABP) is an algebraic circuit (ABP) where all the field constants are non-negative. The circuit (ABP) computes a \emph{monotone polynomial}, where coefficients of all the monomials are non-negative.
\end{definition}

\begin{remark}
\label{rem:abp-skew}
    It is not hard to show that skew circuits and ABPs are essentially the same model, up to constant factors (e.g., see the discussion in \cite{Mah2014}). In particular, an ABP of size $s$ and length $\Delta$ can be converted to a skew circuit of size $O(s)$ and product-depth $\Delta$. If the original ABP is monotone, the skew circuit is monotone as well.
\end{remark}

Finally, we define \emph{parse trees} to analyze the computation of each monomial in a circuit. The notion has appeared in several previous results~\cite{AJMV1998, JS1982, KPR2023, MP2008, Ven1992}. 

\begin{definition}[Parse Trees] \label{def:parse-trees}
    A \emph{parse tree} $\T$ of an algebraic circuit $C$ is obtained as follows:
    \begin{itemize}
        \item We include the root gate of $C$ in $\T$.
        \item For every $+$ gate in $\T$, we arbitrarily include \emph{any one} of its children in $\T$.
        \item For every $\times$ gate in $\T$, we include \emph{all} of its children in $\T$.
    \end{itemize}
    We call a parse tree \emph{reduced} if we ignore every $+$ gate, and its parent and (only) child are directly connected by an edge.
\end{definition} 

\begin{remark}
    It is easy to see that every (reduced) parse tree is associated with a monomial of the polynomial computed by the circuit. For a (reduced) parse tree $\T$, let $\mathrm{val}(\T)$ be its output. Then the polynomial computed by the circuit $C$ is $\sum_{\T} \mathrm{val}(\T)$, where the sum is over all the parse trees of $C$. From here on, whenever we use the term parse tree, we mean the reduced parse tree.
\end{remark}

\subsection{Graph theory}

In the following, let $H$ be a graph. A vertex cover of $H$ is a subset $C \subseteq V(H)$ such that for every edge $e\in E(H)$, some vertex $v\in C$ is an endpoint of $e$.
The vertex-cover number of $H$ is the minimum size of a vertex-cover in $H$.

\begin{definition}[Tree-decomposition and treewidth] \label{def:tree-decomp}
    A {tree-decomposition} of $H$ is a tree $T$ whose vertices are annotated with \emph{bags} $\{X_t\}_{t \in V(T)}$, subject to the following conditions:
    \begin{enumerate}
        \item Every vertex $v$ of $H$ is in at least one bag.
        \item For every edge $(u,v)$ in $H$ there is a bag in $T$ that contains both $u$ and $v$.
        \item For any vertex $v$ of $H$, the subgraph of $T$ induced by the bags containing $v$ is a subtree.
    \end{enumerate}
    The \emph{width} of a tree decomposition $T$ is $\max_{t \in V(T)} |X_t|-1$. The \emph{treewidth} of the graph $H$, denoted $\tw(H)$, is the minimum width over all tree decompositions of $H$. 
\end{definition}

\begin{definition}[Path-decomposition and pathwidth] \label{def:path-decomp}
    A {path-decomposition} of a graph $H$ is a tree-decomposition where the underlying tree is a path. The \emph{width} of a path decomposition $P$ is $\max_{t \in V(P)} |X_t|-1$. The \emph{pathwidth} of the graph $H$, denoted $\pw(H)$, is the minimum width over all path decompositions of $H$. 
\end{definition}

We usually consider trees with a designated root vertex.
The height of such a tree is the number of vertices on a longest root-to-leaf path.
We assume trees in tree-decompositions to be rooted in a way that minimizes their height.

\section{Bounded versions of treewidth and pathwidth}
\label{sec:bound-dp-tw}

In this section, we describe $\Delta$-treewidth, our bounded-depth version of treewidth, and $\Delta$-pathwidth, a bounded-length version of pathwidth. We connect these to other known graph parameters. Moreover, we show that full $d$-ary trees of depth $\Delta$, while having $\Delta$-treewidth $1$, have $(\Delta-1)$-treewidth $d-1$. This behavior around the threshold $\Delta$ will ultimately allow us to conclude \Cref{thm:dh}.

\subsection{Connections to other graph parameters}
First, recall the definition of $\Delta$-treewidth from the introduction.
We stress that a tree with a single node has height $1$ according to our definition.

\begin{definition}
    For fixed $\Delta \in \mathbb N$, the $\Delta$-treewidth of a graph $H$, denoted by $\tw_\Delta(H)$, is the minimum width over all tree decompositions of $H$ with underlying tree $T$ of height at most $\Delta$.
\end{definition}

While depth-restricted tree-decompositions did arise before in the literature~\cite{CIP2016}, their depth was not fixed to concrete \emph{constants} in these contexts, but rather to, say, $O(\log |V(H)|)$. In particular, differences between $\Delta$-treewidth and $(\Delta-1)$-treewidth were not considered.

Let us connect the $\Delta$-treewidth of graphs to other graph parameters:
\begin{itemize}
    \item The $1$-treewidth of a graph $H$ is merely its number of vertices $|V(H)|$, as the requirement on the height forces the tree-decomposition to consist of a single bag. On the other extreme, the $|V(H)|$-treewidth of $H$ equals the treewidth of $H$.
    \item The $2$-treewidth is already more curious: For any vertex-cover $C$, a tree-decomposition of height $2$ for $H$ can be obtained by placing $C$ into a root bag $X_r$ that is connected to bags $X_t$ for $t\in V(H)\setminus C$, where $X_t$ contains $t$ and its neighbors, all of which are in $C$. This shows that the $2$-treewidth of $H$ is at most the vertex-cover number $\vc(H)$ of $H$. In fact, the $2$-treewidth of $H$  equals the so-called \emph{vertex-integrity} of $H$ (minus $1$). This graph parameter is defined as $\min_{S\subseteq V(H)}(|S|+\max_{C}|V(C)|)$, where $C$ ranges over all connected components in the graph $H-S$, see~\cite{BES1987,GHK+2025}.
    \item By balancing tree-decompositions~\cite{CIP2016}, a universal constant $c$ can be identified such that, for all graphs $H$ on $k$ vertices, the $c \log k$-treewidth of $H$ is bounded by $4 \tw(H)+3$. That is, at the cost of increasing width by a constant multiplicative factor, tree-decompositions can be assumed to be of logarithmic height.
\end{itemize}

Our upper bound proof will show that vertices of degree $1$ can be removed safely from $H$ without changing the bounded-depth complexity of $\Hom_{H,n}$. This holds essentially because such vertices and their incident edges can be assumed to be present in the leaves of a tree-decomposition; these leaves then do not contribute to the product-depth of the constructed circuit. 
This naturally leads to the notion of \emph{pruned} $\Delta$-treewidth.

\begin{definition}
    The \emph{pruned} $\Delta$-treewidth of a graph $H$, denoted by $\ptw_\Delta(H)$, is the $\Delta$-treewidth of the graph $H$ with all vertices of degree at most $1$ removed.
\end{definition}

We also define analogous \emph{bounded-length} versions of pathwidth.

\begin{definition}
    For fixed $\Delta \in \mathbb N$, the $\Delta$-pathwidth of a graph $H$, denoted by $\pw_\Delta(H)$, is the minimum width over all path decompositions of $H$ with underlying path $P$ of length at most $\Delta$.
\end{definition}

\begin{definition}
    The \emph{pruned} $\Delta$-pathwidth of a graph $H$, denoted by $\ppw_\Delta(H)$, is the $\Delta$-pathwidth of the graph $H$ with all vertices of degree at most $1$ removed.
\end{definition}

\subsection{Full d-ary trees}

We conclude this section by exhibiting a pattern $H$ whose $\Delta$-treewidth shows a strong phase transition that we can exploit in our depth hierarchy theorem: Its $\Delta$-treewidth is low, but even its $(\Delta-1)$-treewidth is high. As it turns out, $H$ can be chosen to be the full $d$-ary tree.

\begin{theorem}
\label{thm:tw-lb}
    Let $\Delta,d$ be positive integers and let $T_{\Delta}$ be the full $d$-ary tree of height $\Delta$. Then $\tw_\Delta(T_{\Delta})=1$ whereas $\tw_{\Delta-1}(T_{\Delta})\geq d-1$.
\end{theorem}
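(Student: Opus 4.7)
The plan is to prove the upper and lower bounds separately. For the upper bound $\tw_\Delta(T_\Delta) \leq 1$, I would take $T_\Delta$ itself as the underlying tree of the decomposition, with the same root: assign to the root $r$ the bag $X_r = \{r\}$ and to every non-root vertex $v$ the bag $X_v = \{v, \mathrm{parent}(v)\}$. Each edge of $T_\Delta$ appears in the bag of its deeper endpoint, every vertex is covered, and for each vertex $v$ the bags containing $v$ form the connected subtree $\{X_v\} \cup \{X_u : u \text{ is a child of } v\}$. The width is $1$, and since the underlying decomposition tree is $T_\Delta$ itself, its height is exactly $\Delta$. Combined with the trivial lower bound coming from the presence of edges, this gives $\tw_\Delta(T_\Delta) = 1$.

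For the lower bound $\tw_{\Delta-1}(T_\Delta) \geq d-1$, I would argue by induction on $\Delta$. The base case $\Delta = 2$ is immediate: $T_2$ is a star on $d+1$ vertices, and any tree decomposition of height $1$ consists of a single bag, which must contain all $d+1$ vertices, giving width $d$. For the inductive step, fix any tree decomposition $(T', \{X_t\})$ of $T_\Delta$ with $T'$ of height at most $\Delta-1$ and width $w$, and let $\rho$ be the root of $T_\Delta$ with children $c_1,\ldots,c_d$ whose corresponding subtrees $T_1,\ldots,T_d$ are each isomorphic to $T_{\Delta-1}$. Write $X_{r'}$ for the bag at the root of $T'$.

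If $|X_{r'}| \geq d$ then $w \geq d-1$ and we are done. Otherwise $|X_{r'}| \leq d-1$, and since $V(T_\Delta)$ partitions into the $d+1$ parts $\{\rho\}, V(T_1), \ldots, V(T_d)$, pigeonhole guarantees that at least two of these parts are disjoint from $X_{r'}$; as $\{\rho\}$ is only one part, some subtree $T_i$ satisfies $V(T_i) \cap X_{r'} = \emptyset$. The key structural step is then to transfer the decomposition down to $T_i$: since each $S_v := \{t \in V(T') : v \in X_t\}$ is a connected subtree of $T'$ and $T_i$ is connected, the union $T'_i := \bigcup_{v \in V(T_i)} S_v$ is a connected subtree of $T'$. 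The disjointness $V(T_i) \cap X_{r'} = \emptyset$ forces $r' \notin T'_i$, so $T'_i$ lies inside a subtree of $T'$ rooted at some child of $r'$, whose height is at most $\Delta - 2$. Intersecting each bag of $T'_i$ with $V(T_i)$ yields a tree decomposition of $T_i \cong T_{\Delta-1}$ of width at most $w$ and height at most $\Delta - 2$, so the inductive hypothesis gives $w \geq d-1$.

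The main subtle step is the structural claim that the ``home'' subtree $T'_i$ of an untouched $T_i$ stays strictly below $r'$ in $T'$; the rest is essentially bookkeeping once the correct $T_i$ has been located by pigeonhole. I would also pay careful attention to the counting: the crucial point is not merely that $X_{r'}$ leaves some vertex uncovered, but that it leaves an entire subtree uncovered, which is why we need at least \emph{two} untouched parts among the $d+1$ and exploit that $\{\rho\}$ alone accounts for only one of them.
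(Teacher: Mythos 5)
Your proof is correct and essentially mirrors the paper's: the paper extracts the inductive core into a separate lemma (if $G$ contains $d$ disjoint connected subgraphs each of $(\Delta-1)$-treewidth at least $d-1$, then $\tw_\Delta(G)\geq d-1$), but the root-bag case analysis, the restriction of the decomposition to an untouched child subtree with one fewer level of height, and the appeal to the inductive hypothesis on $T_{\Delta-1}$ are all the same. One minor remark: your pigeonhole over $d+1$ parts is slightly heavier than needed, since $V(T_1),\dots,V(T_d)$ are already $d$ pairwise disjoint sets, so $|X_{r'}|\leq d-1$ alone guarantees one of them is missed.
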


In order to prove the theorem, we first prove the following useful lemma for inductively bounding the $\Delta$-treewidth of a given graph.

\begin{lemma}
\label{subgraph treewidth lemma}
For any integers $d$ and $\Delta$, if a graph $G$ contains at least $d$ disjoint connected subgraphs $G_1, G_2, \ldots, G_d$, and the $(\Delta-1)$-treewidth of each of them is at least $d-1$, then the $\Delta$-treewidth of $G$ is at least $d-1$.
\end{lemma}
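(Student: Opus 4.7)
The plan is to argue by contradiction: suppose $G$ admits a tree decomposition $(T, \{X_t\}_{t\in V(T)})$ of height at most $\Delta$ whose width is at most $d-2$, so that every bag has at most $d-1$ vertices. I would first examine the root bag $X_r$. Since $|X_r| \leq d-1$ and the subgraphs $G_1,\dots,G_d$ are pairwise vertex-disjoint, the pigeonhole principle produces some index $i$ with $V(G_i)\cap X_r = \emptyset$.

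Next, I want to show that $V(G_i)$ is contained entirely in the subtree rooted at a single child of $r$. Let $c_1,\dots,c_k$ be the children of $r$, and let $T_j$ be the subtree of $T$ rooted at $c_j$. For each vertex $v\in V(G_i)$, the bags containing $v$ form a subtree of $T$ that avoids $r$, hence lies inside some unique $T_j$. I would then show that adjacent vertices of $G_i$ must lie in the same $T_j$: if $uv\in E(G_i)$, then some bag contains both $u$ and $v$, and this bag lies in exactly one $T_j$. Since $G_i$ is connected, propagating along edges shows that all of $V(G_i)$ lives in a single $T_{j_0}$; note that $T_{j_0}$ has height at most $\Delta-1$.

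Now restrict the decomposition to $T_{j_0}$ and intersect each bag with $V(G_i)$, defining $X'_t = X_t\cap V(G_i)$ for $t\in V(T_{j_0})$. By the previous paragraph, every vertex and every edge of $G_i$ is covered; and the subtree condition for each $v\in V(G_i)$ is inherited directly from $(T,\{X_t\})$. Thus $(T_{j_0},\{X'_t\})$ is a tree decomposition of $G_i$ of height at most $\Delta-1$ whose width is at most $d-2$, since $|X'_t|\leq |X_t|\leq d-1$. This contradicts the hypothesis $\tw_{\Delta-1}(G_i)\geq d-1$, completing the proof.

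The only step with any subtlety is the connectivity argument pinning $G_i$ into a single child-subtree; the rest is a straightforward application of pigeonhole together with the inheritance of the tree decomposition axioms under restriction to a subtree and to a vertex subset. Both of these inheritance properties are standard, so I would only sketch them rather than verify each of the three tree-decomposition axioms in detail.
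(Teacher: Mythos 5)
Your proof is correct and takes essentially the same route as the paper: pick the root bag, argue by counting (you call it pigeonhole) that some $G_i$ misses it, use the connectedness of $G_i$ and the subtree condition to confine $G_i$ to a single child subtree, and then restrict to obtain a height-$(\Delta-1)$ decomposition of $G_i$ of width at most $d-2$, contradicting the hypothesis. The only cosmetic difference is that you phrase it as a proof by contradiction whereas the paper argues directly that either the height exceeds $\Delta$ or the width is at least $d-1$; the substance is identical.
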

\begin{proof}
Suppose $T$ is a rooted tree-decomposition of graph $G$, and $R$ is the root bag of $T$. We are going to prove that either the height of $T$ is larger than $\Delta$ or the width of $T$ is at least $d-1$. There are two cases to consider:

\begin{enumerate}
    \item $R$ contains at least one vertex from each of the subgraphs $G_1, G_2, \ldots, G_d$.
    \item $R$ does not contain any vertex from (at least) one of the subgraphs $G_1, G_2, \ldots, G_d$. 
\end{enumerate} 

In the first case, the size of $R$ is at least $d$, so the width of $T$ is at least $d-1$. In the second case, we can assume without loss of generality that $R$ does not contain any vertex from $G_1$. Let $T_1,T_2,\ldots,T_k$ be the subtrees obtained by removing $R$ from $T$. Since $R$ does not contain any vertex of $G_1$, at least one of $T_1,T_2,\ldots,T_k$ must contain some vertex from $G_1$. Suppose that subtree is $T_1$. For any vertex $v$ contained in both $T_1$ and $G_1$, since $v$ is not in the root bag $R$, it must be the case that $v$ is not contained in any other $T_2,\ldots,T_k$ as well. Similarly, every neighbor $u$ of $v$ in $G_1$ is also contained in $T_1$ as $u$ is not contained in $R$, and there must be a bag in $T$ which contains both $u$ and $v$. Proceeding this way, we get that $T_1$ contains the whole of $G_1$, and the vertices from $G_1$ appear nowhere else.

Removing vertices not in $G_1$ from each bag of $T_1$, we obtain a new tree $T_1'$. We claim that $T_1'$ is a tree-decomposition of $G_1$. Indeed, all vertices of $G_1$ are in $T_1$ and the bags in $T_1$ that contain a vertex $v$ form a connected component since $T$ was a tree decomposition of $G$. So the same holds for $T_1'$. Moreover, for every edge $(u,v)$ in $G_1$, there is a bag in $T_1$ that contain both $u$ and $v$, so the same holds for $T_1'$. 

Since the $(\Delta-1)$-treewidth of $G_1$ is at least $d-1$, either the height of $T_1'$ is larger than $\Delta-1$ or the width of $T_1'$ is at least $d-1$. Since $T_1'$ was formed by removing vertices from $T_1$, the same holds for $T_1$. Consequently, either the height of $T$ is larger than $\Delta$ or the width of $T$ is at least $d-1$. In both cases, the lemma holds.
\end{proof}

We are now ready to prove \Cref{thm:tw-lb}.

\begin{proof}[Proof of \Cref{thm:tw-lb}]
    We have $\tw_\Delta(T_{\Delta})=1$ for all $\Delta$, since $T_{\Delta}$ is a tree. For the lower bound, consider the base case $\Delta=2$. The height-$1$ tree-decomposition of the height-$2$ full $d$-ary tree $T_2$ has only one bag, and this bag contains all the vertices from $T_2$. Hence, its treewidth is $d \geq d-1$.
    
    Assume by induction that the theorem holds for all $2 \leq \Delta \leq k$ for $k \in \mathbb N$. Then, for $\Delta=k+1$, consider the height-$(k+1)$ full $d$-ary tree $T_{k+1}$. Removing the root node of $T_{k+1}$ yields $d$ pairwise disjoint height-$k$ full $d$-ary trees. By our inductive assumption, the $(k-1)$-treewidth of each of these trees is at least $d-1$. Then by \Cref{subgraph treewidth lemma}, the $k$-treewidth of $T_{k+1}$ is at least $d-1$.
\end{proof}

\section{Upper bounds in \texorpdfstring{\Cref{thm: main}}{} and \texorpdfstring{\Cref{thm: abp}}{}}
\label{sec:upper-bound}

We prove the upper bound in \Cref{thm: main}.
First, we require additional standard notation for tree-decompositions:
We consider $T$ to be rooted with a choice of root that minimizes its height.
Given a tree-decomposition of $H$ with underlying tree $T$ and bags $\{X_t\}_{t \in V(T)}$, write $\gamma(t):=\bigcup_{s\geq t}X_{s}$ for the cone at $t$,
where $s$ ranges over all descendants of $t$ in the tree $T$. 

Our second definition is more technical and specific to the dynamic programming approach we use to compute homomorphism polynomials in a bottom-up manner: It allows us to track \emph{where} in the tree-decomposition an edge contributes to a monomial of the final polynomial.
We say that an \emph{edge-representation} of $H$ in $T$ is a function $\rep:E(H)\to V(T)$ that assigns to each edge of $H$ a node in $T$ 
such that $\{u,v\}\subseteq X_{\rep(uv)}$ for all $uv\in E(H)$.
Note that each edge $uv\in E(H)$ is already entirely contained in \emph{at least one} bag by the definition of a tree-decomposition; the function $\rep$ simply chooses one such bag for each edge.

Given an edge-representation $\rep$, we define the $\rep$-height of $T$ (which will be the product-depth of the constructed circuit) as the maximum number of ``active'' nodes $t$ on a root-to-leaf path in $T$, where we call a node $t$ active
%\[
%\Delta(T,\rep)=c_r+\max_{i}\Delta(T_{i},\rep),
%\]
iff
\begin{itemize} 
\item there are distinct $e,e' \in E(H)$ with $\rep(e) = \rep(e') = t$, or 
\item there is at least one $e \in E(H)$ with $\rep(e) = t$ and $t$ has a child, or 
\item $t$ has at least two children. 
\end{itemize}
In our dynamic programming approach that proceeds bottom-up on a tree-decomposition, only active nodes require multiplication gates; the rep-height will thus amount to the overall product-depth of the circuit.  

\begin{lemma}
\label{Lem:UB}
Let $H$ be a graph with a tree-decomposition consisting of tree $T$ and bags $\{X_t\}_{t \in V(T)}$,
and let $\rep$ be an edge-representation of $H$ in $T$. Then there are circuits for
$\mathrm{Hom}_{H,n}$ and $\mathrm{ColIso}_{H,n}$ with product-depth equal to the $\rep$-height of $T$
and $O(|V(T)|\cdot n^{w})$ gates for $\max_{t\in V(T)}|X_t|=w$.
\end{lemma}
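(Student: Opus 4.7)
The plan is to construct the circuit by bottom-up dynamic programming along the rooted tree $T$. For every node $t\in V(T)$ and every map $f\colon X_t\to[n]$, define
\[
P_t^f \;=\; \sum_{\substack{g:\gamma(t)\to[n]\\ g|_{X_t}=f}} \;\prod_{\substack{e=uv\in E(H)\\ \rep(e)\in V(T_t)}} x_{g(u),g(v)},
\]
where $T_t$ is the subtree of $T$ rooted at $t$. The target polynomial is then $\Hom_{H,n}=\sum_{f:X_r\to[n]} P_r^f$ where $r$ is the root; the $\CI_{H,n}$ case is identical after substituting $x_{g(u),g(v)}^{(uv)}$ for $x_{g(u),g(v)}$.

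At an internal node $t$ with children $c_1,\dots,c_k$, the recurrence I will use is
\[
P_t^f \;=\; \Bigl(\prod_{\substack{e=uv\in E(H)\\ \rep(e)=t}} x_{f(u),f(v)}\Bigr)\cdot \prod_{i=1}^{k} Q_{c_i}^{\,f|_{X_{c_i}\cap X_t}}, \qquad Q_{c_i}^{h} \;=\; \sum_{\substack{h':X_{c_i}\to[n]\\ h'|_{X_{c_i}\cap X_t}=h}} P_{c_i}^{h'}.
\]
Correctness rests on the running-intersection property of tree-decompositions: the cones $\gamma(c_i)$ pairwise intersect only inside $X_t$, so once $f$ is fixed on $X_t$ the subtree contributions factor independently, and every edge $uv\in E(H)$ contributes exactly once (namely at the node $\rep(uv)$, which is guaranteed to contain both $u$ and $v$ by the definition of an edge-representation).

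For the size bound, each node $t$ indexes at most $n^{|X_t|}\le n^{w}$ polynomials $P_t^f$ and $Q_t^h$, and forming each of them uses a constant number of new gates on top of the children's gates (the sum defining $Q_{c_i}^h$ is a single $+$ gate collecting a subset of the $P_{c_i}^{h'}$). Summed over $t\in V(T)$, this gives $O(|V(T)|\cdot n^w)$ gates. For the product-depth bound, I will check that a non-active node $t$ contributes no new multiplication layer: if $t$ is a leaf carrying at most one represented edge, $P_t^f$ is a single variable (or the constant $1$); if $t$ has no edge represented and a single child, $P_t^f$ is a $+$-combination of $P_{c_1}^{h'}$, which can be merged with the $+$ layer already present at $c_1$. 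At every active node exactly one $\times$ layer is introduced, either because we multiply at least two things among the edge monomial(s) at $t$ and the child aggregates $Q_{c_i}^{\cdot}$. Thus by induction on $T$, the product-depth of the circuit equals the $\rep$-height of $T$.

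The main obstacle I anticipate is not the algebraic recurrence itself but keeping the circuit strictly layered with alternating $+/\times$ levels so that ``passing through'' a non-active node does not accidentally create an extra product layer or violate the alternation. I expect to handle this by a routine bookkeeping argument: whenever two $+$ layers become adjacent after a pass-through they can be collapsed into one, and a stray $+$ gate in front of a $\times$ gate can always be absorbed by distributing; this preserves monotonicity and changes the gate count by only a constant factor per node, which is already absorbed in the $O(|V(T)|\cdot n^w)$ bound.
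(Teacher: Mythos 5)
Your proposal is correct and follows essentially the same dynamic-programming route as the paper: the quantities $P_t^f$ coincide with the paper's $\mathsf{Restr}_{t,h}$, the recurrence is the same (with the intermediate sums $Q_{c_i}^h$ made explicit), and the size and product-depth analysis (one $\times$-layer per active node, pass-through for non-active nodes) matches the paper's argument.
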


\begin{proof}
We describe the circuit for $\Hom_{H,n}$ and remark that the circuit for $\CI_{H,n}$ can be constructed analogously.
Considering $T$ to be rooted, and proceeding from the leaves of $T$ to the root, we inductively compute
polynomials $\mathsf{Restr}_{t,h}$ for nodes $t\in V(T)$ and functions $h:X_t\to[n]$. 
The polynomials are defined as 
\begin{align*}
\mathsf{Restr}_{t,h} & =\sum_{\substack{f:\gamma(t)\to[n]\\
f\text{ extends }h\ 
}
}\prod_{\substack{uv\in E(H)\\
\mathrm{rep}(uv) \geq t
}
}x_{f(u),f(v)}.
\end{align*}
Here, we write $s \geq t$ to denote that $s$ is a descendant of $t$ in $T$.
Note that $\mathsf{Restr}_{t,h}$ is the restriction of $\Hom_{H,n}$ to homomorphisms $f$ that extend a given homomorphism $h$ for the bag at $t$, such that only those edges feature in the monomials that are represented in the cone $\gamma(t)$.
Then $\Hom_{H,n}$ is the sum of $\mathsf{Restr}_{r,h}$ over all
$h : X_r \to [n]$ at the root $r$ of $T$.

We show how to compute the polynomials $\mathsf{Restr}_{p,h}$ for nodes $p \in V(T)$.
Let $p\in V(T)$ be a node with children
$N\subseteq V(T)$, possibly with $N = \emptyset$ if $p$ is a leaf. Assume that $\mathsf{Restr}_{t,h'}$ is known for
all $t\in N$ and functions $h' : X_t \to [n]$. Then we have
\begin{align}
\label{eq: restr-recursive}
\mathsf{Restr}_{p,h}=\left(\prod_{\substack{uv\in E(H)\\
\mathrm{rep}(uv) = p
}
}x_{h(u),h(v)}\right) \cdot \prod_{t\in N}\sum_{\substack{h':X_t\to[n]\\
\text{agreeing with }h\\
\text{on } X_t \cap X_p
}
}\mathsf{Restr}_{t,h'}.
\end{align}
From this construction of the circuit, the size bound claimed in the lemma is obvious.
Let us investigate its product-depth:
In the final circuit computing $\Hom_{H,n}$, every path from the output gate to an input gate corresponds to a path in $T$ from the root to a leaf.
Analyzing \eqref{eq: restr-recursive}, we see that every node $t$ on this path contributes $1$ to the product-depth iff $t$ is active under the edge-representation $\rep$.
Indeed, a leaf $p$ only contributes to the product-depth if two edges $e,e'\in E(H)$ are represented in its bag, i.e., $\rep(e)=\rep(e')=p$, as then there is a nontrivial product in the first product (over $uv$, shown in parentheses in \eqref{eq: restr-recursive}).
A node $p$ with one child only contributes if at least one edge is represented in its bag, as then the product between the parentheses and the remaining factor is nontrivial.
A node $p$ with at least two children always contributes to the product-depth.

To show that the circuit correctly computes $\Hom_{H,n}$, we need to show that the recursive expression for $\mathsf{Restr}_{p,h}$ in \eqref{eq: restr-recursive} is correct.
Note that every edge is represented by $\rep$ in exactly one bag and thus appears precisely once in a monomial.
Because $X_p$ is a separator in $H$, any function $f: \gamma(p) \to [n]$ gives rise to $|N|$ functions $f_t : \gamma(t)\to [n]$ for $t\in N$ that all agree on their values for $X_p$ (that is, on their values on $X_p \cap X_t$) and can otherwise be chosen independently. Conversely, any ensemble of such consistent functions can be merged to a function $h: \gamma(p) \to [n]$.
The product over all $t\in N$ as in \eqref{eq: restr-recursive} thus yields $\mathsf{Restr}_{p,h}$.
\end{proof}

Finally, to prove the upper bound in \Cref{thm: main},
let $H^\dagger$ be the graph obtained from $H$ by removing all degree-$1$ vertices.
Given a tree-decomposition for $H^\dagger$ with underlying tree $T$ of height $\Delta$ and width $w$ witnessing that $\ptw_\Delta(H) = w$,
we obtain a tree-decomposition with some tree $T'$ for $H$ and an edge-representation $\mathrm{rep}$ of $H$ in $T'$ of rep-height $\Delta$ as follows:
For each vertex $v \in V(H)$ of degree $1$, with neighbor $u \in V(H)$, choose some node $t \in T$ with $u\in X_t$ and add a node $t'$ as a neighbor of $t$ to $T$ with bag $X_{t'} = \{v,u\}$.
Choose an arbitrary representation $\mathrm{rep}$ of $H$ in the resulting tree-decomposition with tree $T'$ and observe that its $\mathrm{rep}$-height is at most the height $\Delta$ of $T$, even though the height of $T'$ may be $\Delta+1$: The bags added for degree-$1$ vertices and their incident edges do not contribute towards the $\mathrm{rep}$-height, as they are leaf nodes and represent single edges.
The upper bound thus follows from \Cref{Lem:UB}.

\begin{remark}
\label{rem:ub-abp}
The construction from Lemma~\ref{Lem:UB} also yields an ABP of length $|V(T)|$ and size $O(|V(T)|\cdot n^w)$ when given a path-decomposition $T$ of $H$ with maximum bag size $w$. To see this, note that the product over $t \in N$ in \eqref{eq: restr-recursive} involves only a single factor when $T$ is a path-decomposition, so \eqref{eq: restr-recursive} overall amounts to a skew-multiplication of a single monomial with the recursively computed polynomial.
\end{remark}

\section{Lower bounds in \texorpdfstring{\Cref{thm: main}}{} and \texorpdfstring{\Cref{thm: abp}}{}}
\label{sec:lower-bound}

We adapt the lower bound proofs of \cite{KPR2023} to prove the lower bounds in our theorems. Recall that proving the lower bound for $\CI_{H,n}$ is enough, since we can use a circuit computing $\Hom_{H,n}$ to obtain a circuit computing $\CI_{H,n}$ without changing the depth of the circuit using~\cite[Lemma 8]{KPR2023}.  We summarize the results here for completeness.

\begin{lemma}
\label{lem: col-hom-reduction}
    Let $k,\Delta$ be positive integers and $H$ be a fixed pattern graph on $k$ vertices.
    \begin{itemize}
    \item If there is a monotone circuit of product-depth $\Delta$ and size $s$ for $\CI_{H,n}$, then there is such a circuit of size $O(s)$ for $\Hom_{H,n}$.
    \item If there is a monotone circuit of product-depth $\Delta$ and size $s$ for $\Hom_{H,n'}$, then there is such a circuit of size $O(s^{|E(H)|})$ for $\CI_{H,n}$, where $n'=kn$.
    \end{itemize}
    The results also hold if circuits are replaced by ABPs, and product-depth is replaced by the length of the ABP, provided the length is at least the degree of the polynomials.
\end{lemma}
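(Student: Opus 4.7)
The plan is to observe that both reductions of~\cite{KPR2023} can be realized as variable substitutions at the leaves of the circuit (or at the edge labels of an ABP), so that monotonicity and the bounded product-depth (resp.\ length) are preserved automatically.

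For the direction from $\CI_{H,n}$ to $\Hom_{H,n}$, I would replace every leaf labeled $x_{i,j}^{(uv)}$ by a leaf labeled $x_{i,j}$. Since the substitution happens entirely at the leaves, the gate structure, size, product-depth, and monotonicity are all unchanged, and each monomial $\prod_{uv\in E(H)} x_{f(u),f(v)}^{(uv)}$ of $\CI_{H,n}$ becomes the corresponding monomial of $\Hom_{H,n}$ under the unchanged sum over $f:V(H)\to[n]$.

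For the harder direction from $\Hom_{H,n'}$ to $\CI_{H,n}$ with $n'=kn$, I would use a block-structured substitution. Enumerate $V(H)=\{v_1,\dots,v_k\}$, partition $[n']$ into consecutive blocks $B_1,\dots,B_k$ of size $n$, and fix an orientation for every edge of $H$. At each leaf $x_{i,j}$ of the circuit for $\Hom_{H,n'}$, I would substitute $x_{i-(u-1)n,\, j-(v-1)n}^{(uv)}$ when $i\in B_u$, $j\in B_v$, and $(u,v)$ is an oriented edge of $H$, and substitute the constant $0$ otherwise. Again the substitution is at leaves only, preserving size, product-depth, and monotonicity (since $0$ is non-negative). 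The verification amounts to checking that a monomial $\prod_{uv}x_{f(u),f(v)}$ of $\Hom_{H,n'}$ survives iff $f(v)\in B_v$ for every $v$, in which case it becomes a $\CI_{H,n}$-monomial under the change of variables $g(v):=f(v)-(v-1)n\in[n]$.

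For the ABP variant, the same substitutions apply to edge labels instead of circuit leaves, and length and monotonicity are preserved identically; the hypothesis $\Delta\geq|E(H)|$ is needed only because $\CI_{H,n}$ has degree exactly $|E(H)|$, so an ABP shorter than this cannot produce any of its monomials. The main point that will require care is the verification in the second direction: I will need to argue that the disjointness of the blocks $B_u$ together with the fixed edge orientations ensures that the surviving homomorphisms $f:V(H)\to[n']$ correspond bijectively to color-preserving functions $g:V(H)\to[n]$, so that no spurious monomials appear and none are lost. Everything else is a routine check of size and depth against the definitions of the substitutions.
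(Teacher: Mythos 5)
Your first direction is essentially the paper's: a leaf substitution $x_{i,j}^{(uv)} \mapsto x_{i,j}$ (the paper additionally zeroes out the diagonal variables, but the structure is the same), which preserves size, product-depth, and monotonicity. No issues there.

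The second direction, however, has a genuine gap, and it is exactly the point you flagged as ``requiring care.'' A leaf substitution alone cannot turn $\Hom_{H,n'}$ into $\CI_{H,n}$, because it produces spurious monomials that cannot be cancelled in a monotone circuit. After your block substitution, a monomial $\prod_{uv}x_{f(u),f(v)}$ of $\Hom_{H,n'}$ survives precisely when the ``block map'' $\sigma(w) := \text{block}(f(w))$ is an \emph{endomorphism} of (the oriented) $H$, not necessarily the identity. Acyclic orientations do not rescue this: for the star $K_{1,3}$ with center $4$ oriented $(4,1),(4,2),(4,3)$, the map $\sigma(4)=4$, $\sigma(1)=\sigma(2)=\sigma(3)=1$ respects the orientation, and the corresponding surviving $f$'s yield monomials like $x^{(41)}_{p_4,p_1}x^{(41)}_{p_4,p_2}x^{(41)}_{p_4,p_3}$, which is not a monomial of $\CI_{K_{1,3},n}$. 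A similar phenomenon occurs for two triangles sharing a vertex, and more generally whenever $H$ admits a non-identity orientation-preserving endomorphism. Since the circuit is monotone, these spurious monomials persist.

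This is precisely why the paper does \emph{not} use a pure leaf substitution. Instead it multiplies each substituted variable by an auxiliary indeterminate $y_{uv}$, then takes the $|E(H)|$-fold partial derivative $\partial^{|E(H)|}/\partial y_{e_1}\cdots\partial y_{e_{|E(H)|}}$ and sets all $y_{uv}=0$. This enforces that the multiset $\{\sigma(a)\sigma(b) : ab\in E(H)\}$ equals $E(H)$, which forces $\sigma$ to be an automorphism (not merely an endomorphism), and the resulting polynomial is $|\mathrm{Aut}(H)|\cdot\CI_{H,n}$; dividing by the constant finishes. The differentiation step is also what produces the $O(s^{|E(H)|})$ blowup --- your leaf-only construction would yield $O(s)$, which by itself is a strong hint that something essential is missing, since the lemma is stated with the exponential blowup.

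So: keep the first direction, but the second direction needs the auxiliary-variable-and-differentiation machinery (or some other mechanism that suppresses contributions from non-identity endomorphisms); a leaf substitution, however cleverly oriented or block-structured, cannot do this for general $H$.
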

\begin{proof}
    Given a monotone circuit of product-depth $\Delta$ that computes $\CI_{H,n}$, we replace each variable $x^{(uv)}_{f(u),f(v)}$ with $x_{f(u),f(v)}$ if $f(u) \neq f(v)$ and $0$ otherwise. The circuit now computes $\Hom_{H,n}$.

    For the other direction, let $C$ be the monotone circuit of product-depth $\Delta$ computing the $\Hom_{H}$ polynomial over the vertex set $[k] \times [n]$. Note that a homomorphism $\phi$ from $H$ to the complete graph on $[k] \times [n]$ maps a vertex $u \in [k]$, to $(v,p)$ where $v \in [k]$ and $p \in [n]$. We introduce auxiliary variables $y_{uv}$ for each edge $uv \in E(H)$. For $u,v \in [k]$ and $p,q \in [n]$, we replace the variable $x_{(u,p),(v,q)}$ with $x_{p,q}^{(uv)} y_{uv}$ if $uv \in E(H)$ and $0$ otherwise. 
    
    Let $C'$ be the new circuit obtained after the replacement, and consider the partial derivative ${D:= \frac{\partial^{|E(H)|}}{\partial y_{e_1} \cdots \partial y_{e_{|E(H)|}}} C'}$, with respect to all the edge variables of $H$. Note that every monomial in $D$ contains at least one variable corresponding to each edge of $H$. Further, set $y_{uv}=0$ in $D$ for all $uv \in E(H)$. This ensures that every monomial in $D|_{y_{uv}=0}$ contains exactly one variable corresponding to every edge of $H$, i.e., it counts only the color-preserving homomorphisms. The coefficient of each monomial is $|aut(H)|$, the number of automorphisms of $H$, and dividing by this number gives us $\CI_{H,n}$.
    
    We can compute $D$ using partial derivatives' sum and product rules applied to every gate in a bottom-up fashion. For a gate $g$, we maintain both $g$ and $\partial_{y_e} g$.  The partial derivative of a sum gate, $\partial_{y_e} \sum_i g_i = \sum_i \partial_{y_e}g_i$ is straightforward and does not increase the depth. For a product gate, the derivative $\partial_{y_e} \prod_i g_i = \sum_i \left(\partial_{y_e}g_i \prod_{j \neq i} g_j \right)$ increases the depth by one, but this can be absorbed in the sum layer above. Note that the product-depth does not change in both cases. A partial derivative with respect to a single variable increases the circuit size by a factor of $s$. Hence, the final circuit for $D$ is of size $O(s^{|E(H)|})$, and has product-depth $\Delta$, the same as $C$.

    We also note that both the constructions preserve monotonicity. Moreover, if the original circuit $C$ was \emph{skew} (i.e. an ABP), then so is the final circuit $D$. From \Cref{rem:abp-skew}, we obtain the same results for ABPs as well.
\end{proof}

\subsection{Tree decompositions from parse trees}

Consider a pattern graph $H$ on vertex set $V(H) := [k]$. 
An alternative and more intuitive way to think about the {$n$-th} colored subgraph isomorphism polynomial $\CI_{H,n}$ is to consider the blown-up graph $G$, where each vertex $u \in [k]$ of $H$ is replaced by a `cloud' of $n$ vertices $C_u := \{(u,1),\ldots,(u,n)\}$. 
Every edge $uv \in E(H)$ is replaced by a complete bipartite graph between $C_u$ and $C_v$ with an appropriate label for each of the $n^2$ edges; that is, an edge between $(u,i)$ and $(v,j)$ is labeled $x^{(uv)}_{i,j}$ where $u,v \in [k]$ and $i,j \in [n]$. The polynomial $\CI_{H,n}$ is now obtained by choosing a copy of $H$ in $G$ by picking a vertex from every cloud using a function $f:V(H) \to [n]$, and adding the monomial 
$$m = \prod_{uv \in E(H)} x^{(uv)}_{f(u),f(v)}.$$ 

We say that the monomial $m$ above is supported on a set $S \subseteq [k] \times [n]$ if every element of $S$ looks like $(u,f(u))$ for $u \in [k]$. The polynomial $\CI_{H,n}$ is the sum over all such monomials $m$
$$\CI_{H,n} = \sum_{f:V(H) \to [n]} \prod_{uv \in E(H)} x^{(uv)}_{f(u),f(v)}.$$

\begin{claim}
\label{clm:td-extract}
    Let $\Delta$ be a natural number and $\T$ be a monotone parse tree of product-depth $\Delta$ computing a monomial $m$ of $\CI_{H,n}$. Let $H^{\dagger}$ be the pruned graph obtained by removing all degree-$1$ vertices from $H$. We can extract from $\T$ a tree decomposition of $H^{\dagger}$ with underlying tree $T^{\dagger}$ of height $\Delta$.
\end{claim}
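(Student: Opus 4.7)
The plan is to construct $T^\dagger$ from the reduced parse tree $\T$ by discarding its variable-labeled leaves while retaining the tree of product gates with the inherited parent--child relation. Because $\T$ is a reduced parse tree of product-depth $\Delta$, every root-to-leaf path in $\T$ passes through $\Delta$ product gates before reaching a variable, so $T^\dagger$ has height $\Delta$. Each leaf of $\T$ carries a variable $x^{(uv)}_{f(u),f(v)}$ recording some edge $uv$ of $H$, and since $\T$ computes a single monomial of $\CI_{H,n}$, each edge of $H$ appears as the label of exactly one leaf of $\T$. For each gate $g \in V(T^\dagger)$ I will define a bag $X_g \subseteq V(H^\dagger)$ consisting of those vertices $u \in V(H^\dagger)$ whose incident edges in $H$ have their leaves distributed across at least two \emph{branches} at $g$, where a branch is either a child subtree of $g$ in $\T$ or the complement of $g$'s subtree in $\T$.

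I will then verify the three tree-decomposition axioms for $H^\dagger$. For the vertex-coverage axiom, fix $u \in V(H^\dagger)$; since $u$ has $H$-degree at least $2$, there are at least two leaves of $\T$ incident to $u$, and their lowest common ancestor $r_u$ in $T^\dagger$ splits these leaves across at least two child subtrees (by the minimality of the LCA), forcing $u \in X_{r_u}$. For the edge-coverage axiom, fix $uv \in E(H^\dagger)$ and let $p$ be the parent in $\T$ of the unique leaf labeled $x^{(uv)}_{f(u),f(v)}$; the branch at $p$ consisting solely of this leaf contains only the edge $uv$, and because both $u$ and $v$ have $H$-degree at least $2$, each must have a further incident leaf in some other branch at $p$, which places both into $X_p$.

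The delicate step will be the connectedness axiom. For fixed $u \in V(H^\dagger)$, I will first argue that the support $S(u) := \{g \in V(T^\dagger) : u \in X_g\}$ lies entirely within the subtree of $T^\dagger$ rooted at $r_u$: for any $g$ outside this subtree, all leaves of $\T$ incident to $u$ lie inside a single branch at $g$ (either inside the unique child subtree of $g$ containing $r_u$, or outside $g$'s subtree altogether), so $u \notin X_g$. Then, by induction upward from the bottom of $T^\dagger$, I will show that whenever $g \in S(u)$ lies strictly below $r_u$, the parent $g'$ of $g$ also lies in $S(u)$: the hypothesis $g \in S(u)$ gives an incident leaf of $u$ inside $g$'s subtree, hence in the child-$g$ branch at $g'$; meanwhile, since $g'$ is either $r_u$ itself (already in $S(u)$ by the vertex-coverage case) or a proper descendant of $r_u$ contained in the same child subtree of $r_u$ as $g$, the LCA property of $r_u$ furnishes another incident leaf of $u$ in a different child subtree of $r_u$, which lies outside $g'$'s subtree. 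Together with $r_u \in S(u)$, this identifies $S(u)$ as a subtree of $T^\dagger$ rooted at $r_u$.

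The main obstacle will be precisely this connectedness argument: a naive attempt to prove that $S(u)$ is upward-closed all the way to the actual root of $T^\dagger$ fails, since if all incident leaves of $u$ lie strictly inside the subtree of $r_u$ and $r_u$ is a proper descendant of the root, then for any $g$ strictly above $r_u$ all of $u$'s leaves are contained in the single branch at $g$ holding $r_u$, so $u \notin X_g$. The correct statement, which the two-part argument above establishes, is that $S(u)$ is a subtree of $T^\dagger$ rooted at $r_u$, confined to the subtree of $r_u$ and closed under parents within that subtree.
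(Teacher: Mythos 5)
Your proposal is correct and follows the same blueprint as the paper's proof: the tree $T^{\dagger}$ is the tree of $\times$-gates of the reduced parse tree $\T$, and the bags are exactly the ones the paper constructs. The difference is purely in presentation. The paper builds the bags bottom-up with a marking rule (a vertex $u$ enters the bag of a product gate as the union of \emph{unmarked} vertices of child bags, and is marked once all of its incident edges have been multiplied below). Your definition is a static one: $u\in X_g$ iff the leaves of $\T$ labeled with edges incident to $u$ land in at least two of the ``branches'' at $g$ (child subtrees of $g$ in $\T$, or the complement of $g$'s subtree). Unwinding the marking recursion shows these two definitions coincide bag-for-bag, so you obtain the same tree decomposition. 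What your phrasing buys is a cleaner and more careful verification of the running-intersection (connectedness) axiom: you identify the set $S(u)$ as a subtree rooted at the LCA $r_u$ of $u$'s incident leaves, show it is confined to the subtree of $r_u$, and show upward closure below $r_u$ — whereas the paper compresses this into one sentence (``we mark a vertex only after multiplying all its incident edges''). Your observation that the naive upward-closure statement all the way to the root is false, and must be relativized to the subtree of $r_u$, is a genuine subtlety that your proof makes explicit and the paper's terse argument elides. One small point: the height of $T^{\dagger}$ is only guaranteed to be \emph{at most} $\Delta$ (some root-to-leaf paths of a reduced parse tree may terminate early at a variable leaf), but this is what is needed and the paper's own statement is equally informal on this.
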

\begin{proof}
    Suppose that the monomial $m$ is supported on vertices $(u,f(u))$ where $u \in [k]$ and $f:[k] \to [n]$ is a function. The parse tree $\T$ has height $\Delta+1$. Note that since $\CI_{H,n}$ has $0/1$ coefficients, we can assume that a multiplication gate has only non-constant terms as its children. We build the tree decomposition bottom-up. We `mark' certain vertices in the bags created during this procedure. All such marks are dropped at the end (see \Cref{fig:tree-decomp}). 

    \begin{enumerate}
        \item For an input gate $x_{f(u),f(v)}^{(uv)}$, we add the bag $\{u,v\}$ as a leaf in the tree decomposition. We \emph{mark} all the vertices of degree $1$. The rest are \emph{unmarked}.
        
        \item Let $g$ be a multiplication gate. Suppose $X_1, \ldots, X_m$ are the bags corresponding to the children of $g$ (that we have already constructed) and let $U_i \subseteq X_i$ be the \emph{unmarked} elements of $X_i$. We then add the bag $X_g := \bigcup_{i \in [m]} U_i$ as the root of $X_1,\ldots,X_m$. If there are vertices $(u,f(u))$ such that the monomial computed at $g$ includes all the edges incident on $(u,f(u))$ in the copy of $H$ that $f$ picked, we \emph{mark} all such vertices $u$ in the bag $X_g$.

        \item Finally, after applying the procedure in the previous step to all the gates, we drop the bags (and edges) corresponding to input gates.
    \end{enumerate}

    We claim that the tree decomposition we just constructed with underlying tree $T^{\dagger}$ and bags $\{X_u\}_{u \in V(T^{\dagger})}$ is a tree decomposition of $H^{\dagger}$. Note that \emph{all} the edges of $H$ were covered at the leaf bags (that we finally dropped), as they must be present in the monomial. Since only the degree-$1$ vertices in a leaf bag were \emph{marked}, the parent bags of the leaves (which we include in our tree decomposition) will exactly have the vertices of $H^{\dagger}$, and thus cover all its edges.
    
    We mark (forget) a vertex only after multiplying all its incident edges. Hence, the sub-graph induced by a vertex $u$ (in $H^{\dagger}$) is \emph{connected} in $T^{\dagger}$ and is, in fact, a subtree. As every multiplication gate of the parse tree has exactly one associated bag, the procedure does indeed result in a tree decomposition of $H^{\dagger}$ of height $\Delta$.
\end{proof}

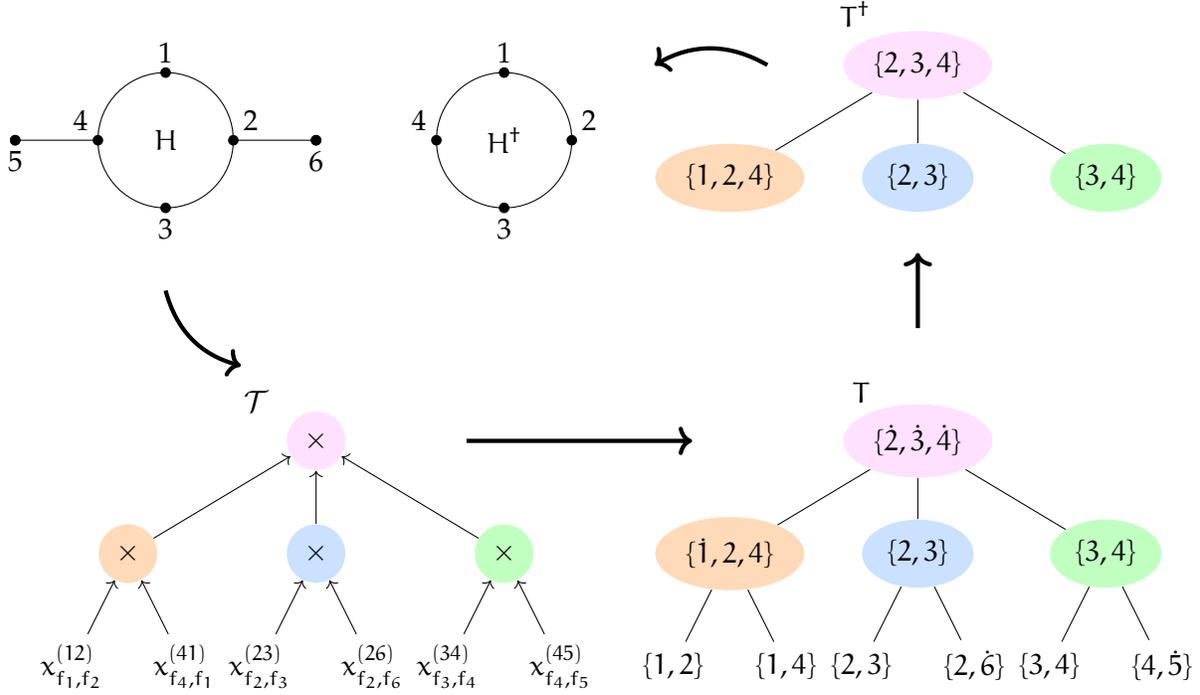
\begin{figure}
    \centering
\begin{tikzpicture}
    \begin{scope}[xshift=-2cm]
        % Smaller radius for the cycle
      \def\r{0.9}
    
      % Main cycle vertices
      \coordinate (v1) at (90:\r);    % top
      \coordinate (v2) at (0:\r);     % right
      \coordinate (v3) at (270:\r);   % bottom
      \coordinate (v4) at (180:\r);   % left
    
      % Extra vertices positioned at right angles to the cycle
      \coordinate (v5) at (2.0,0);    % right of v2 (will be labeled 6)
      \coordinate (v6) at (-2.0,0);   % left of v4 (will be labeled 5)
    
      % Curved cycle edges
      \draw (v1) to[bend left=45] (v2);
      \draw (v2) to[bend left=45] (v3);
      \draw (v3) to[bend left=45] (v4);
      \draw (v4) to[bend left=45] (v1);
    
      % Extra edges
      \draw (v2) -- (v5);
      \draw (v4) -- (v6);
    
      % Draw vertex dots
      \foreach \coord in {v1,v2,v3,v4,v5,v6} {
        \fill (\coord) circle (2pt);
      }
    
      % Labels
      \node[above]       at (v1) {1};
      \node[above right] at (v2) {2};  % adjusted to avoid edge
      \node[below]       at (v3) {3};
      \node[above left]  at (v4) {4};  % adjusted to avoid edge
      \node[below]   at (v6) {5};  % label for left point
      \node[below]    at (v5) {6}; % label for right point
      \node at (0,0) {$H$};
    
    \end{scope}

    \draw[->, bend right, ultra thick] (-2,-2) to (-1,-3);

    \begin{scope}[xshift=-3cm,yshift=-3cm,scale=0.50]
    
    \node[circle,fill=Thistle1] (g1) at (6.0, -2) {$\times$};
    \node[above left] at (5,-1.5) {$\mathcal{T}$};
    \node[circle,fill=PeachPuff1] (g2) at (1.0, -5) {$\times$} edge[->] (g1);
    \node[circle,fill=LightSteelBlue1] (g3) at (6.0, -5) {$\times$} edge[->] (g1);
    \node[circle,fill=DarkSeaGreen1] (g4) at (11.0, -5) {$\times$} edge[->] (g1);
    
    \node[rectangle] (x12) at (-0.5,-8) {$x^{(12)}_{f_1,f_2}$} edge[->] (g2);
    \node[rectangle] (x41) at (2.5,-8) {$x^{(41)}_{f_4,f_1}$} edge[->] (g2);

    \node[rectangle] (x23) at (4.5,-8) {$x^{(23)}_{f_2,f_3}$} edge[->] (g3);
    \node[rectangle] (x26) at (7.5,-8) {$x^{(26)}_{f_2,f_6}$} edge[->] (g3);

    \node[rectangle] (x34) at (9.5,-8) {$x^{(34)}_{f_3,f_4}$} edge[->] (g4);
    \node[rectangle] (x45) at (12.5,-8) {$x^{(45)}_{f_4,f_5}$} edge[->] (g4);
    \end{scope}

    \draw[->,ultra thick] (2,-4) to (5,-4);

    \begin{scope}[xshift=5cm,yshift=-3cm,scale=0.5]
        \node[ellipse,fill=Thistle1] (g1) at (6.0, -2) {$\{\dot 2,\dot 3,\dot 4\}$};
        \node[above left] at (5,-1.25) {$T$};
    \node[ellipse,fill=PeachPuff1] (g2) at (1.0, -5) {$\{\dot 1,2,4\}$} edge[-] (g1);
    \node[ellipse,fill=LightSteelBlue1] (g3) at (6.0, -5) {$\{2,3\}$} edge[-] (g1);
    \node[ellipse,fill=DarkSeaGreen1] (g4) at (11.0, -5) {$\{3,4\}$} edge[-] (g1);

    \node[rectangle] (x12) at (-0.5,-8) {$\{1,2\}$} edge[-] (g2);
    \node[rectangle] (x41) at (2.5,-8) {$\{1,4\}$} edge[-] (g2);

    \node[rectangle] (x23) at (4.5,-8) {$\{2,3\}$} edge[-] (g3);
    \node[rectangle] (x26) at (7.5,-8) {$\{2,\dot 6\}$} edge[-] (g3);

    \node[rectangle] (x34) at (9.5,-8) {$\{3,4\}$} edge[-] (g4);
    \node[rectangle] (x45) at (12.5,-8) {$\{4,\dot 5\}$} edge[-] (g4);
    \end{scope}

    \draw[->, ultra thick] (8,-2.5) to (8,-1.5);

    \begin{scope}[xshift=5cm,yshift=2cm,scale=0.5]
        \node[ellipse,fill=Thistle1] (g1) at (6.0, -2) {$\{2,3,4\}$};
        \node[above left] at (5,-1.25) {$T^{\dagger}$};
    \node[ellipse,fill=PeachPuff1] (g2) at (1.0, -5) {$\{1,2,4\}$} edge[-] (g1);
    \node[ellipse,fill=LightSteelBlue1] (g3) at (6.0, -5) {$\{2,3\}$} edge[-] (g1);
    \node[ellipse,fill=DarkSeaGreen1] (g4) at (11.0, -5) {$\{3,4\}$} edge[-] (g1);
    \end{scope}

    \draw[->, ultra thick, bend right] (6,1) to (4.5,1);

    \begin{scope}[xshift=2.5cm]
        \def\r{0.9}
    
      % Main cycle vertices
      \coordinate (v1) at (90:\r);    % top
      \coordinate (v2) at (0:\r);     % right
      \coordinate (v3) at (270:\r);   % bottom
      \coordinate (v4) at (180:\r);   % left
    
      % Curved cycle edges
      \draw (v1) to[bend left=45] (v2);
      \draw (v2) to[bend left=45] (v3);
      \draw (v3) to[bend left=45] (v4);
      \draw (v4) to[bend left=45] (v1);
    
      % Draw vertex dots
      \foreach \coord in {v1,v2,v3,v4,v5,v6} {
        \fill (\coord) circle (2pt);
      }
    
      % Labels
      \node[above]       at (v1) {1};
      \node[above right] at (v2) {2};  % adjusted to avoid edge
      \node[below]       at (v3) {3};
      \node[above left]  at (v4) {4};  % adjusted to avoid edge
      \node at (0,0) {$H^{\dagger}$};
    \end{scope} 
\end{tikzpicture}
\caption{Extracting a tree decomposition of height $2$ for $H^{\dagger}$ from a parse-tree of product-depth $2$ for a monomial of $\CI_{H,n}$. We have for all $i\in[6]$, $f_i := f(i) \in [n]$.}
\label{fig:tree-decomp}
\end{figure}

\subsection{Lower bounds for \texorpdfstring{$\CI_{H,n}$}{ColSub}}

\begin{theorem}
\label{thm:mon-lb}
    Let $\Delta$ be a natural number and $H$ be a pattern graph. Any monotone circuit of \emph{product-depth} $\Delta$ computing the polynomial $\CI_{H,n}$ has size $\Omega(n^{\ptw_{\Delta}(H)+1})$.
\end{theorem}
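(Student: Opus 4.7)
The plan is to mirror the monotone circuit lower bound of Komarath, Pandey, and Rahul~\cite{KPR2023}, using Claim~\ref{clm:td-extract} in place of their full-depth tree-decomposition extraction. Fix a monotone circuit $C$ of product-depth $\Delta$ and size $s$ computing $\CI_{H,n}$, and write $k := |V(H)|$. Monotonicity combined with the $\{0,1\}$-coefficient structure of $\CI_{H,n}$ implies that every monomial $m_f$ (indexed by $f : V(H) \to [n]$) admits a unique parse tree $\T_f$ in $C$: any additional parse tree evaluating to $m_f$ would, by the cancellation-free nature of monotone computation, inflate the coefficient of $m_f$ beyond $1$. Claim~\ref{clm:td-extract} turns $\T_f$ into a tree decomposition of $H^\dagger$ of height $\Delta$, which by the definition of $\ptw_\Delta(H)$ must contain a bag $B_f \subseteq V(H^\dagger)$ of size at least $\ptw_\Delta(H) + 1$; let $g_f$ denote the multiplication gate of $C$ where this bag is realized.

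The heart of the proof is then to partition the $n^k$ functions $f$ among the gates of $C$ via the charging map $f \mapsto g_f$ and establish the fiber bound
\[
\bigl|\{f : g_f = g\}\bigr| \;\leq\; n^{k - \ptw_\Delta(H) - 1} \quad\text{for every gate } g,
\]
which immediately yields $n^k \leq s \cdot n^{k - \ptw_\Delta(H) - 1}$ and hence $s \geq \Omega(n^{\ptw_\Delta(H)+1})$. To prove this fiber bound I would parallel~\cite[Section~5]{KPR2023}: first, reduce at no cost in size or product-depth to a monotone \emph{set-multilinear} circuit, exploiting the fact that $\CI_{H,n}$ is already set-multilinear with respect to the edge-partition of its variables, so that every gate $g$ carries a fixed edge support $E(g) \subseteq E(H)$ and the bag $B_f$ depends only on $g_f$. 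Then use uniqueness of parse trees to argue that, once $g_f$ and the "context" in $C$ above $g_f$ are fixed, the only remaining degrees of freedom in $f$ lie outside $B(g_f)$, yielding at most $n^{k - |B(g_f)|} \leq n^{k - \ptw_\Delta(H) - 1}$ functions per gate.

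The main obstacle is the clean execution of this fiber bound: one must avoid double-counting a single $f$ across multiple gate choices (addressed by choosing $g_f$ canonically, e.g., as the gate closest to the root hosting a maximum-size bag), and one must correctly isolate the role of the bag $B(g_f)$ as the unique shared information between the below and above portions of $\T_f$. Once these are in place, summation over gates is immediate. The only new ingredient beyond~\cite{KPR2023} is the replacement of their full-depth tree-decomposition extraction by the depth-$\Delta$ extraction of Claim~\ref{clm:td-extract}, which ensures $|B(g_f)| \geq \ptw_\Delta(H) + 1$ rather than $\geq \tw(H) + 1$, and the corresponding substitution of $\ptw_\Delta(H)$ for $\tw(H)$ in the final bound.
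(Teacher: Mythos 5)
Your high-level plan matches the paper's: extract a tree decomposition from each monomial's parse tree via \Cref{clm:td-extract}, charge the monomial to the gate $g_f$ at which a large bag $B_f$ (of size $\ge \ptw_\Delta(H)+1$) appears, and bound the fiber of each gate by $n^{k-\ptw_\Delta(H)-1}$. However, the part you flag as ``the main obstacle'' is exactly where your proposal has a genuine gap, and the route you sketch for filling it does not work. First, the claim that every monomial has a \emph{unique} parse tree is not immediate and is not what the paper uses: a monotone circuit can realize a single monomial via several parse trees with positive fractional constants summing to $1$, so monotonicity plus $\{0,1\}$ coefficients alone does not force uniqueness (the paper only normalizes so that product gates have no constant children, and its argument never relies on uniqueness). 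Second, your plan to ``fix the context above $g_f$'' and infer that $f$ is determined on $B(g_f)$ does not follow: different monomials charged to the same gate $g$ can have genuinely different parse trees above $g$, so there is no single ``context'' to fix, and set-multilinearizing the circuit (itself an extra reduction not present in the paper and requiring its own justification) only pins down the edge-support of $g$, not the vertex assignment $f$ on $B(g_f)$.

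What the paper actually does to prove the fiber bound is a parse-tree \emph{surgery} argument that your proposal is missing. Fix $m$ with parse tree $\T$ and designated gate $g$, and suppose another monomial $m'$ with parse tree $\T'$ through $g$ assigns a different value $f'(u) \neq f(u)$ to some vertex $u$ in the bag at $g$. One then splices pieces of $\T$ and $\T'$ at $g$ to build a new valid parse tree $\T''$ whose monomial is supported on both $(u,f(u))$ and $(u,f'(u))$; this is a spurious monomial, and since the circuit is monotone it cannot be cancelled, a contradiction. The splicing needs a case analysis keyed to whether $u$ is \emph{marked} at $g$'s bag (then at least two children of $g$ see $u$, and one swaps one child subtree) or \emph{unmarked} (then $(u,f(u))$ occurs both inside and outside $\T_g$, and one swaps $\T_g$ with $\T'_g$); the fact that $H^\dagger$ has no degree-$1$ vertices is what guarantees the two-children situation in the marked case. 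This exchange argument, not parse-tree uniqueness, is the load-bearing step, and without it your fiber bound is unsubstantiated.
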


\begin{proof}
    Let $C$ be the monotone circuit computing $\CI_{H,n}$, and let the pruned $\Delta$-treewidth of $H$, $\ptw_{\Delta}(H) = t$. Consider a monomial $m$ of $\CI_{H,n}$ supported on vertices $(u,f(u))$ for $u \in [k]$ and $f:[k] \to [n]$. Let $\T$ be a parse tree of $C$ associated with $m$. Now, \Cref{clm:td-extract} gives a tree decomposition of $H^{\dagger}$ with tree $T^{\dagger}$ and bags $\{X_u\}_{u \in V(T^{\dagger})}$. Consequently, there is a bag $X$ of size at least $t+1$ in the tree decomposition. Without loss of generality, we assume that $|X|=t+1$. If it is greater, we will only obtain a better lower bound. We also assume that the vertices in the bag are $1,\ldots,t+1$ (relabeling the vertices of $H$ does not change the complexity of $\CI_{H,n}$). Let the corresponding gate in $\T$ associated with $X$ be $g$. 
    
    We show that only a `few' monomials can contain $g$ in their parse tree. More precisely, we claim that any monomial $m'$ (other than $m$) that contains $g$ in its parse tree is supported on vertices $\{(u,f(u))\}_{u \in [t+1]}$. Suppose not. Let $m'$ have a parse tree $\T'$ with gate $g$ in it but vertex $(u,f'(u))$ for some $u \in [t+1]$, with $f(u) \neq f'(u)$. Recall that we obtained the tree decomposition using the parse tree $\T$ of $m$. For a gate $g$ in a parse tree, we denote by $\T_g$ the subtree rooted at $g$. Note that if two parse trees contain a multiplication gate $g$, all the children of $g$ are the same in both the parse trees. We now analyze two cases:
    \begin{enumerate}
        \item The vertex $u$ is marked at the bag associated with $g$: There are at least two children $g_1,g_2$ of $g$ in $\T$ that compute monomials with $(u,f(u))$ in them. This holds because there are no degree-$1$ vertices in the bags. If $g_1$ in $\T'$ contains the vertex $(u,f'(u))$, we replace $\T'_{g_2}$ with $\T_{g_2}$. Similarly, in the other case, when $g_2$ contains $(u,f'(u))$. If both $g_1,g_2$ do not contain $(u,f'(u))$ in $\T'$, we arbitrarily replace $\T'_{g_1}$ (say) with $\T_{g_1}$.
        \item The vertex $u$ is not marked at the bag associated with $g$: The vertex $(u,f(u))$ appears in $\T_g$ \emph{as well as} outside $\T_g$. In $\T'$, if $(u,f'(u))$ appears in $\T'_{g}$, we replace $\T_g$ with $\T'_g$ in $\T$. Otherwise, we replace $\T'_g$ with $\T_g$ in $\T'$.
    \end{enumerate}
    
    In all cases, we obtain a valid parse tree $\T''$ of $C$ that produces a monomial supported on $(u,f(u))$ \emph{and} $(u,f'(u))$. This leads to a contradiction, since the monomial produced by $\T''$ is spurious and cannot be cancelled because the circuit is monotone. Every monomial (parse tree) $m$ has a gate $g$ whose corresponding bag has at least $t+1$ vertices. And any other monomial $m'$ (parse tree) that contains this gate $g$ must share at least $t+1$ vertices in its support with $m$. Thus, the maximum number of monomials containing this gate $g$ equals the number of colored isomorphisms that fix $t+1$ vertices, which is $n^{k-t-1}$. Recall that there are $n^k$ monomials in $\CI_{H,n}$, and so we need at least $n^{t+1}$ gates in the circuit.
\end{proof}

The lower bound proof for algebraic branching programs is very similar.

\begin{theorem}
    Let $\Delta$ be a positive integer and $H$ be a pattern graph such that $\Delta \geq |E(H)|$. Any monotone ABP of \emph{length} $\Delta$ computing the polynomial $\CI_{H,n}$ has size $\Omega(n^{\ppw_{\Delta}(H)+1})$.
\end{theorem}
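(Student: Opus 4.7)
The plan is to adapt the proof of \Cref{thm:mon-lb} from product-depth-$\Delta$ monotone circuits to length-$\Delta$ monotone ABPs, replacing tree-decompositions with path-decompositions throughout. By \Cref{rem:abp-skew}, a monotone ABP of length $\Delta$ and size $s$ translates into a monotone \emph{skew} circuit of product-depth $\Delta$ and size $O(s)$, so it suffices to prove the lower bound for monotone skew circuits of product-depth $\Delta$.

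The key observation is that parse trees of a skew circuit have a very restricted ``linear'' shape: at each multiplication gate, at most one child is an internal gate, so a reduced parse tree is a path of (at most) $\Delta$ multiplication gates, each carrying some input-leaf children attached to the spine. I will re-run the extraction procedure of \Cref{clm:td-extract} on such a parse tree, with the following adjustment: at each multiplication gate, the new bag is formed from the unmarked vertices of the single non-leaf child together with the endpoints of the edge variables appearing at the leaf children, while the degree-$1$ vertices are marked at the corresponding leaf-level bags (which are dropped at the end, exactly as in the original argument). Because the spine of the parse tree is a path and every bag is created along it, what comes out is a path-decomposition of $H^{\dagger}$ with underlying path of length at most $\Delta$. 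This is where the hypothesis $\Delta \geq |E(H)|$ is used: it guarantees that parse trees have enough multiplication layers to host all edges of $H$ as leaves, so the extraction makes sense at all.

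Once we have this path-decomposition, by definition some bag $X$ in it has size at least $\ppw_{\Delta}(H) + 1$. Letting $g$ be the corresponding multiplication gate in the chosen parse tree $\T$ of a monomial $m$ supported on $\{(u,f(u))\}_{u\in V(H)}$, I will then replay verbatim the surgery / splicing argument from the proof of \Cref{thm:mon-lb}: if any other parse tree $\T'$ through $g$ produced a monomial $m'$ disagreeing with $m$ on some $u \in X$, then in the two cases (``$u$ already marked at $X$'' versus ``$u$ still present above $X$'') we swap a subtree of $\T$ with a subtree of $\T'$ across $g$ to manufacture a parse tree computing a spurious monomial whose support contains both $(u,f(u))$ and $(u,f'(u))$. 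Monotonicity forbids cancellation, giving a contradiction. Hence each gate $g$ is reached by at most $n^{k - \ppw_{\Delta}(H) - 1}$ parse trees, and since $\CI_{H,n}$ has $n^k$ monomials, the skew circuit (and therefore the ABP) must have $\Omega(n^{\ppw_{\Delta}(H)+1})$ gates.

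The only genuinely new step relative to \Cref{thm:mon-lb} is the verification that the object extracted from a skew parse tree is truly a \emph{path}-decomposition rather than a general tree-decomposition. That is almost entirely bookkeeping: skewness forces each internal multiplication gate of the reduced parse tree to have exactly one non-leaf child, so the non-leaf backbone is linear, and after discarding the input-leaf bags (as done for leaves in \Cref{clm:td-extract}) we are left with a sequence of bags joined in a path, containing all vertices and covering all edges of $H^{\dagger}$, with the standard connectivity property inherited from the marking scheme. The splicing and counting arguments then transfer without modification, which I expect to be the main technical obstacle only insofar as one must carefully check that the case analysis on whether $u$ is ``marked at $X$'' still applies in the skew setting; because marking depends only on which edges have been multiplied below a given gate, and not on the branching structure of the parse tree, the same dichotomy goes through.
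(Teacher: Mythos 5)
Your proposal is correct and follows essentially the same route as the paper's proof: convert the ABP to a monotone skew circuit via \Cref{rem:abp-skew}, observe that the extraction of \Cref{clm:td-extract} yields a path-decomposition once the input-gate bags are dropped (because skewness forces every multiplication gate of the parse tree to have at most one non-leaf child), and then reuse the splicing and counting argument of \Cref{thm:mon-lb} unchanged. The only cosmetic difference is that you spell out the "linear spine" observation and the role of the $\Delta \geq |E(H)|$ hypothesis in somewhat more detail than the paper does.
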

\begin{proof}
    As mentioned earlier in \Cref{rem:abp-skew}, the size-$s$ monotone ABP of length $\Delta$ computing $\CI_{H,n}$ has an equivalent monotone skew-circuit $C$ of size $O(s)$ and product-depth $\Delta$.
    Consider a monomial $m$ of $\CI_{H,n}$ supported on vertices $(u,f(u))$ for $u \in [k]$ and $f:[k] \to [n]$. Let $\T$ be a parse tree of $C$ associated with $m$. 
    
    We observe that the procedure described in the proof of \Cref{clm:td-extract} extracts a length-$\Delta$ \emph{path decomposition} of the pruned graph $H^{\dagger}$ instead: as the circuit is skew, all the multiplication gates in $\T$ have at most one non-leaf child. Since we finally dropped the bags corresponding to input gates in our procedure, the tree decomposition we obtain is in fact a path decomposition! 
    
    Taking the pruned $\Delta$-pathwidth of $H$ to be $t$, the same proof implies that the number of monomials containing a particular gate $g$ is $n^{k-t-1}$, thus implying a size lower bound of $n^{t+1}$.
\end{proof}

\section{Depth Hierarchy}

Combining our previous results allows us to prove a depth-hierarchy theorem for bounded-depth monotone algebraic circuits.\footnote{A similar hierarchy can also be shown for monotone ABPs.} 

\begin{theorem}~\label{thm:mlbw}
    For all integers $n$ and $\Delta$, there exists a pattern graph $H_{\Delta}$ such that $\CI_{H_{\Delta},n}$ can be computed by a monotone product-depth $(\Delta+1)$ circuit of size $ O(n|H_{\Delta}|)$ but any product-depth $\Delta$ monotone circuit computing the polynomial needs size $n^{\Omega(|H|^{1/\Delta})}$.
\end{theorem}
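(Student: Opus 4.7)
The plan is to combine the tight complexity characterization of \Cref{thm: main} with the $\Delta$-treewidth phase transition for full $d$-ary trees from \Cref{thm:tw-lb}. Concretely, I would take $H_\Delta$ to be the full $d$-ary tree $T_{\Delta+2}$ of height $\Delta+2$, with $d$ a free parameter controlling $|H_\Delta| = \Theta(d^{\Delta+1})$. The key observation is that pruning the degree-$1$ vertices of $T_{\Delta+2}$ deletes exactly the bottom layer of leaves, leaving $H_\Delta^\dagger = T_{\Delta+1}$. By \Cref{thm:tw-lb} this pruned tree lies exactly at the phase transition: $\tw_{\Delta+1}(T_{\Delta+1}) = 1$, while $\tw_\Delta(T_{\Delta+1}) \geq d-1$.

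Given this choice, both the upper and the lower bound fall out of \Cref{thm: main}. For the upper bound, $\ptw_{\Delta+1}(H_\Delta) = 1$, and the dynamic-programming construction in \Cref{Lem:UB} (applied to a width-$1$, height-$(\Delta+1)$ tree-decomposition of $H_\Delta^\dagger$ with single leaf-bags added back for the pruned vertices, exactly as in the proof of the upper bound in \Cref{thm: main}) yields a monotone product-depth $(\Delta+1)$ circuit of size $O(|H_\Delta| \cdot n^{2})$, which matches the claimed form up to constants. For the lower bound, $\ptw_\Delta(H_\Delta) = \tw_\Delta(T_{\Delta+1}) \geq d-1$, so the lower bound half of \Cref{thm: main} forces any monotone product-depth $\Delta$ circuit for $\CI_{H_\Delta,n}$ to have size $\Omega(n^{d})$. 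Since a full $d$-ary tree of height $\Delta+2$ has $\Theta(d^{\Delta+1})$ vertices, we have $d = \Theta(|H_\Delta|^{1/(\Delta+1)})$, which delivers the claimed $n^{\Omega(|H_\Delta|^{1/\Delta})}$ bound once the off-by-one in the denominator of the exponent is absorbed into the $\Omega(\cdot)$.

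There is no substantive technical obstacle: the heavy lifting has been done in \Cref{thm: main} and \Cref{thm:tw-lb}. The only real subtlety is picking the tree height correctly. A natural first guess of $h = \Delta+1$ fails, because then $H_\Delta^\dagger = T_\Delta$ and \Cref{thm:tw-lb} gives $\tw_\Delta(T_\Delta) = 1$, collapsing the lower bound. Shifting to $h = \Delta+2$ places the pruned graph on the high-complexity side of the phase transition for $\ptw_\Delta$ while keeping it on the low-complexity side for $\ptw_{\Delta+1}$, which is exactly what the depth hierarchy statement requires; every other step is routine assembly of previously established results.
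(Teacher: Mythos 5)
Your construction and overall strategy coincide with the paper's: take $H_\Delta = T_{\Delta+2}$, observe that pruning yields $H_\Delta^\dagger = T_{\Delta+1}$, then apply \Cref{thm:tw-lb}, \Cref{Lem:UB}, and \Cref{thm:mon-lb}. Your size computation $|H_\Delta| = \Theta(d^{\Delta+1})$, hence $d = \Theta(|H_\Delta|^{1/(\Delta+1)})$, is also correct --- and here you are in fact more careful than the paper, which at this point asserts $d = \Theta(|H_\Delta|^{1/\Delta})$, a miscount of the vertices in a full $d$-ary tree of height $\Delta+2$.

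The gap is in your final sentence: you cannot ``absorb the off-by-one in the denominator of the exponent into the $\Omega(\cdot)$.'' The bound you have actually established is $n^{\Omega(d)} = n^{\Omega(|H_\Delta|^{1/(\Delta+1)})}$, and $|H_\Delta|^{1/(\Delta+1)}$ is \emph{polynomially} smaller than $|H_\Delta|^{1/\Delta}$ (their ratio is $|H_\Delta|^{-1/(\Delta(\Delta+1))} \to 0$), so it is not $\Omega\bigl(|H_\Delta|^{1/\Delta}\bigr)$. Since this discrepancy sits in the \emph{exponent} of $n$, it corresponds to a super-polynomial gap in the size bound and cannot be swallowed by an asymptotic constant. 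What your argument proves is the weaker (though still nontrivial) separation $n^{\Omega(|H_\Delta|^{1/(\Delta+1)})}$.

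To recover the exponent as stated, take a more economical pattern: let $H_\Delta$ be $T_{\Delta+1}$ with a \emph{single} pendant leaf hung off each leaf of $T_{\Delta+1}$, rather than $d$ new children. Pruning again gives $H_\Delta^\dagger = T_{\Delta+1}$, so every step involving $\ptw_{\Delta+1}$, $\ptw_\Delta$, \Cref{Lem:UB}, and \Cref{thm:mon-lb} goes through unchanged, but now $|H_\Delta| = \Theta(d^\Delta)$, whence $d = \Theta(|H_\Delta|^{1/\Delta})$ and the lower bound reads $n^{\Omega(|H_\Delta|^{1/\Delta})}$ as claimed.
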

\begin{proof}
    For an integer $d$, let $H_{\Delta} := T_{\Delta+2}$ be the full $d$-ary tree of height $\Delta+2$. Note that $d=\Theta(|H_{\Delta}|^{1/\Delta})$. The pruned $(\Delta+1)$-treewidth of $H_{\Delta}$ is equal to the $(\Delta+1)$-treewidth of the full $d$-ary tree of height $(\Delta+1)$. That is, $\ptw_{\Delta+1}(H_{\Delta})=\tw_{\Delta+1}(T_{\Delta+1})=1$. So by \Cref{Lem:UB}, there exists a monotone circuit of product-depth $\Delta+1$ and size $O(n|H_{\Delta}|)$, which computes $\CI_{H_{\Delta},n}$. 
    
    On the other hand, we have $\ptw_\Delta(H_{\Delta})=\tw_\Delta(T_{\Delta+1}) \geq d-1$ by \Cref{thm:tw-lb}. Hence, by \Cref{thm:mon-lb}, every monotone circuit of product-depth $\Delta$ computing $\CI_{H_{\Delta},n}$ necessarily has size at least $\Omega(n^{\ptw_\Delta(H_{\Delta})+1}) = n^{\Omega(|H_{\Delta}|^{1/\Delta})}$.
\end{proof}

If we consider the case when the pattern graph is of size $\Theta(n)$ in \Cref{thm:mlbw}, we obtain the depth hierarchy result in \Cref{thm:dh}. As an aside, we note that an analogous depth hierarchy cannot be obtained for the polynomials $\Hom_{H,n}$ using our methods, as the blow up in the size given by \Cref{lem: col-hom-reduction} is exponential, when $|H|=\Theta(n)$ is not a constant.

\section{Acknowledgements}

We want to thank Vishwas Bhargava, Cornelius Brand, Deepanshu Kush, and Anurag Pandey for insightful discussions during this project.  We also thank Marcin Pilipczuk for pointing out the graph parameter \emph{vertex integrity}.

Part of this work was carried out during the Copenhagen Summer of Counting \& Algebraic Complexity, funded by research grants from VILLUM FONDEN (Young Investigator Grant 53093) and the European Union (ERC, CountHom, 101077083). Views and opinions expressed are those of the authors only and do not necessarily reflect those of the European Union or the European Research Council Executive Agency. Neither the European Union nor the granting authority can be held responsible for them.

%\bibliographystyle{alphaurl}
%\bibliography{refs}

\renewcommand\bibname{References}
\printbibliography
%\nocite{*}

\end{document}